\theoremstyle{plain}
\newtheorem{thm}{Theorem}
\newtheorem{Pp}[thm]{Proposition}
\newtheorem{Co}[thm]{Corollary}
\newtheorem{Lm}[thm]{Lemma}
\newtheorem{Df}[thm]{Definition}
\newcommand{\ket}[1]		{{|#1\rangle}}
\newcommand{\qH}			{{\mathsf{H}}}
\newcommand{\cnot}			{{\textsf{c-Not}}}
\newcommand{\swap}		{{\textsf{Sw}}}
\newcommand{\qid}			{{\textsf{Id}}}
\newcommand{\had}			{{\textsf{H}}}
\newcommand{\phase}		{{\textsf{S}}}
\newcommand{\qpi}			{{\mathsf{\pi/8}}}
\newcommand{\qs}			{{\textsf{q}_\textsf{s}}}
\newcommand{\qh}			{{\textsf{q}_\textsf{h}}}
\newcommand{\blk}			{{\Box}}
\newcommand{\qA}			{{\mathbb{A}}}
\newcommand{\qU}			{{\mathbb{U}}}
\newcommand{\qD}			{{\mathbb{D}}}
\newcommand{\xL}			{{\textsf{L}}}
\newcommand{\xN}			{{\textsf{N}}}
\newcommand{\xR}			{{\textsf{R}}}
\newcommand{\cfg}[4]		{{ {#1} \; \overset{\overset{\text{{\normalsize $#2$}}}{\bigtriangledown}}{#3} \; {#4} }}
\newcommand{\dcBQP}		{{\mathsf{dcBQP}}}
\newcommand{\Proj}			{{\mathsf{Proj}}}
\newcommand{\smn}                  {{$s$-$m$-$n$}}
\newcommand{\udl}[1]                {\underline{#1}}
\newcommand{\rdl}[1]                {\underline{\underline{#1}}}
\newcommand{\qurtains}             {\hfill{QED}}
\newcommand{\pto}                   {\rightharpoonup}
\newcommand{\ignore}[1]             {{}}
\newcommand{\pnats}                 {{\mathbb{N}^+}}
\newcommand{\nats}                  {{\mathbb N}}
\newcommand{\Prob}                  {{\mathsf{Prob}}}
\newcommand{\dom}                   {\mathop{\textrm{dom}}}
\newcommand{\tab}                    {{\hspace*{5mm}}}
\renewcommand\thesection{\arabic{section}.\kern -.3em}
\renewcommand{\thesubsection}{\arabic{section}.\arabic{subsection}.\kern -.5em}
\begin{document}
\title{Quantum machines with classical control}
\author{Paulo Mateus$^{a}$\thanks{Corresponding author.} , Daowen Qiu$^{b}$,
\hskip 2mm Andr\'e Souto$^{c}$\footnote{An extended version of this paper was published in Festschrift volume following the conference honoring Am\'ilcar Sernadas by College Publications.}\\
\small{{\it $^a$Instituto de Telecomunica\c{c}\~{o}es, Departamento de Matem\'{a}tica, Instituto Superior T\'{e}cnico, }}\\
\small{{\it   University of Lisbon, Av. Rovisco Pais 1049-001, Lisbon, Portugal}}\\
\small{{\it $^b$Department of
Computer Science, Sun Yat-sen University, Guangzhou 510006, China}}\\
\small{{\it $^c$Departamento de Inform\'atica da Faculdade de Ci\^encias da Universidade Lisboa,
 }}\\
\small{{\it   LaSIGE, Faculdade de Ci\^encias, Universidade de Lisboa, Portugal}}\\
}
\date{ }
\maketitle

\begin{abstract}

Herein we survey the main results concerning quantum automata and machines with classical control. These machines were originally proposed by Sernadas {\em et al} in \cite{acs:pmat:yo:06}, during the FCT QuantLog project. 
First, we focus on the expressivity of quantum automata with both quantum and classical states. We revise the result obtained in \cite{dqiu:liu:pmat:acs:13} where it was proved that such automata are able to recognise, with exponentially less states than deterministic finite automata,  a family of regular languages that cannot be recognised by other types of quantum automata. 

Finally, we revise the concept of quantum Turing machine with classical control introduced in \cite{pmat:acs:asouto:14}. The novelty of these machines consists in the fact that their termination problem is completely deterministic, in opposition to other notions in the literature. Concretely, we revisit the result that such machines fulfil the $s$-$m$-$n$ property, while keeping the expressivity of a quantum model for computation.
\end{abstract}

\section{Introduction}

Quantum based machines were thought of by Feynman \cite{Fey82} when it became clear that quantum systems were hard to emulate with classical computers. The first notion of quantum Turing machine was devised by Deutsch \cite{Deu85}, and although it is sound and full working, it evolves using a quantum superposition of states. The main problem with Deutsch Turing machine is that checking its termination makes the machine evolution to collapse, interfering in this way with the quantum evolution itself.  Due to this issue, it is not obvious how one can extend the classical computability results, such as the $s$-$m$-$n$ property, for these quantum Turing machines. 

To avoid these complications, the community adopted other models of computation, such as quantum circuits, where it is relatively easy to present quantum algorithms. Moreover, these models were closer to what one was expecting, at that time, to be a physically implementable quantum computer. However, it soon became clear that implementing a full-fledged quantum computer was a long-term goal. For this reason, the community looked into restricted models of quantum computation, that required only a finite amount of memory -- quantum automata. Interestingly, although finite quantum automata (or measure once one-way quantum finite automata -- MO-1QFA) were able to accept some regular languages with exponentially less states,  MO-1QFA do not accept all regular languages. 

One of the main goals of the FCT QuantLog project was to understand if by endowing quantum systems or devices with classical control one could, in one hand, avoid the termination problem of quantum Turing machines, and on the other hand, extend the expressiveness of quantum automata, while keeping the exponential conciseness in the number of states. Some of these problems were introduced in the seminal paper by Sernadas {\em et al} \cite{acs:pmat:yo:06}.

During the QuantLog project two main results were attained along this line of research. Firstly, an extension of classical logic was proposed that was able to deal with quantum systems -- the exogenous quantum propositional logic (EQPL) \cite{pmat:acs:04a,pmat:acs:04b,pmat:acs:05,rchadha:pmat:acs:css:05,rchadha:pmat:acs:06b}. A second result, was an algorithm to minimize quantum automata \cite{PQL12,zhe:dqiu:gru:li:pmat:13}, which was an essential step to fully understand the exponential conciseness of quantum automata. Interestingly, the method also allowed to minimize constructively probabilistic automata (a problem that was open for more than 30 years) and that was previously characterized in terms of category theory in \cite{pmat:acs:css:99a,sch:pmat:99,chermida:pmat:00a,chermida:pmat:00b}. 

The previous results set the ground to show that quantum automata endowed with classical control recognise a family of regular languages that cannot be recognised by other types of quantum automata and moreover, with exponentially less states than deterministic finite automata \cite{dqiu:liu:pmat:acs:13}. Along the same line, by endowing classical Turing machines with a quantum tape, one can define a deterministic well behaved quantum Turing machine, where the usual classical theorems from computability can be derived \cite{pmat:acs:asouto:14}. Given the contribution of Am\'ilcar Sernadas to these two  elegant results, it is worthwhile to revisit them in this volume. The first set of results is presented in Section~2 and the second set of results is revised in Section~3.

\section{Quantum automata}\label{sec2}
A Measure Only One-way Quantum Finite Automaton (MO-1QFA) is defined as a quintuple
${\cal A}=(Q, \Sigma, |\psi_{0}\rangle,\{U(\sigma)\}_{\sigma\in\Sigma},Q_{acc})$,
where $Q$ is a set of finite states, $|\psi_{0}\rangle$ is the initial state
that is a superposition of the states in $Q$, $\Sigma$ is a finite
input alphabet, $U(\sigma)$ is a unitary matrix for each
$\sigma\in\Sigma$, and $Q_{acc}\subseteq Q$ is the
set of accepting states.

As usual, we identify $Q$ with an orthonormal base of a complex
Euclidean space and every state $q\in Q$ is identified with a
basis vector, denoted by Dirac symbol $|q\rangle$ (a column
vector), and $\langle q|$ is the conjugate transpose of
$|q\rangle$. We describe the computing process for any given input
string $x=\sigma_{1}\sigma_{2}\cdots\sigma_{m}\in\Sigma^{*}$. At
the beginning the machine ${\cal A}$ is in the initial state
$|\psi_{0}\rangle$, and upon reading $\sigma_{1}$, the
transformation $U(\sigma_{1})$ acts on $|\psi_{0}\rangle$. After
that, $U(\sigma_{1})|\psi_{0}\rangle$ becomes the current state
and the machine reads $\sigma_{2}$. The process continues until
the machine has read $\sigma_{m}$ ending in the state
$|\psi_{x}\rangle=U(\sigma_{m})U(\sigma_{m-1})\cdots
U(\sigma_{1})|\psi_{0}\rangle$. Finally, a measurement is
performed on $|\psi_{x}\rangle$ and the accepting probability
$p_{a}(x)$ is equal to
\[
p_{a}(x)=\langle\psi_{x}|P_{a}|\psi_{x}\rangle=\|P_{a}|\psi_{x}\rangle\|^{2}
\]
where $P_{a}=\sum_{q\in Q_{acc}}|q\rangle\langle q|$ is the
projection onto the subspace spanned by $\{|q\rangle: q\in
Q_{acc}\}$.

Now we further recall the definition of multi-letter QFA
\cite{BRS07}.

A $k$-letter 1QFA ${\cal A}$ is defined as a quintuple ${\cal
A}=(Q,\Sigma, |\psi_{0}\rangle, \nu,Q_{acc})$ where $Q$, $|\psi_{0}\rangle$, $\Sigma$, $Q_{acc}\subseteq Q$, are the same as those in MO-1QFA above, and $\nu$ is a function that assigns a unitary
transition matrix $U_{w}$ on $\mathbb{C}^{|Q|}$ for each string
$w\in (\{\Lambda\}\cup\Sigma)^{k}$, where $|Q|$ is the cardinality
of $Q$.

The computation of a $k$-letter 1QFA ${\cal A}$ works in the same
way as the computation of an MO-1QFA, except that it applies
unitary transformations corresponding not only to the last letter
but the last $k$ letters received. When
$k=1$, it is exactly an MO-1QFA as defined before. According
to \cite{BRS07,QY09}, the languages accepted by $k$-letter 1QFA  are a proper subset of regular languages for any $k$.

A Measure Many One-way Quantum Finite Automaton (MM-1QFA) is defined as a 6-tuple
${\cal A}=(Q,\Sigma,|\psi_{0}\rangle,\{U(\sigma)\}_{\sigma\in\Sigma\cup
\{\$\}},Q_{acc},Q_{rej})$, where $Q,Q_{acc}\subseteq
Q,|\psi_{0}\rangle,\Sigma,\{U(\sigma)\}_{\sigma\in\Sigma\cup
\{\$\}}$ are the same as those in an MO-1QFA defined above,
$Q_{rej}\subseteq Q$ represents the set of rejecting states, and
$\$\not\in\Sigma$ is a tape symbol denoting the right end-mark.
For any input string
$x=\sigma_{1}\sigma_{2}\cdots\sigma_{m}\in\Sigma^{*}$, the
computing process is similar to that of MO-1QFAs except that after
every transition, ${\cal A}$ measures its state with respect to the three
subspaces that are spanned by the three subsets $Q_{acc},
Q_{rej}$, and $Q_{non}$, respectively, where $Q_{non}=Q\setminus
(Q_{acc}\cup Q_{rej})$. In other words, the projection measurement
consists of $\{P_{a},P_{r},P_{n}\}$ where $P_{a}=\sum_{q\in
Q_{acc}}|q\rangle\langle q|$, $P_{r}=\sum_{q\in
Q_{rej}}|q\rangle\langle q|$, $P_{n}=\sum_{q\in Q\setminus
(Q_{acc}\cup Q_{rej})}|q\rangle\langle q|$. The machine stops
after the right end-mark $\$$ has been read. Of course, the
machine may also stop before reading  $\$$ if the current state, after
the machine reading some $\sigma_{i}$ $(1\leq i\leq m)$, does not
contain the states of $Q_{non}$. Since the measurement is
performed after each transition with the states of $Q_{non}$ being
preserved, the accepting probability $p_{a}(x)$ and the rejecting
probability $p_{r}(x)$ are given as follows (for convenience, we
denote $\$=\sigma_{m+1}$):
\[
p_{a}(x)=\sum_{k=1}^{m+1}\|P_{a}U(\sigma_{k})\prod_{i=1}^{k-1}(P_{n}U(\sigma_{i}))|\psi_{0}\rangle\|^{2},
\]
\[
p_{r}(x)=\sum_{k=1}^{m+1}\|P_{r}U(\sigma_{k})\prod_{i=1}^{k-1}(P_{n}U(\sigma_{i}))|\psi_{0}\rangle\|^{2}.
\]
Here we define $\prod_{i=1}^{n}A_i= A_nA_{n-1}\cdots A_1$.

Bertoni {\it et al} \cite{BMP03}
introduced a 1QFA, called 1QFACL that allows a more
general measurement than the previous  models. Similar to the case
in MM-1QFA, the state of this model can be observed at each step,
but an observable ${\cal O}$ is considered with a fixed, but
arbitrary, set of possible results ${\cal C}=\{c_1,\dots,c_n\}$,
without limit to $\{a,r,g\}$ as in MM-1QFA. The accepting
behavior in this model is also different from that of the previous
models. On any given input word $x$, the computation displays a
sequence $y\in {\cal C}^{*}$ of results of ${\cal O}$ with a
certain probability $p(y|x)$, and the computation is accepted if
and only if $y$ belongs to a fixed regular language ${\cal
L}\subseteq {\cal C}^{*}$.   Bertoni {\it et al}
\cite{BMP03} called  such a language ${\cal L}$ {\it control
language}.

More formally, given an input alphabet $\Sigma$ and the end-marker
  symbol $\$\notin\Sigma$, a 1QFACL over the working
  alphabet $\Gamma=\Sigma\cup\{\$\}$ is a five-tuple ${\cal
  M}=(Q, |\psi_{0}\rangle,\{U(\sigma)\}_{\sigma\in\Gamma},{\cal O},{\cal L})$, where
\begin{itemize}

\item $Q$, $|\psi_{0}\rangle$ and $U(\sigma)$ $(\sigma\in\Gamma)$ are defined
as in the case of MM-1QFA;

\item ${\cal O}$ is an observable with the set of possible results
${\cal C}=\{c_1,\dots,c_s\}$ and  the projector set
$\{P(c_i):i=1,\dots,s\}$ of which $P(c_i)$ denotes the projector
onto the eigenspace corresponding to $c_i$;

\item  ${\cal L}\subseteq{\cal C}^{*}$ is a regular language
(control language).

\end{itemize}

The input word $w$ to 1QFACL ${\cal M}$ is in the form:
$w\in\Sigma^{*}\$$, with symbol $\$$ denoting the end of a word.
Now, we define the behavior of ${\cal M}$ on word $x_1\dots
x_n\$$. The computation starts in the state $|\psi_{0}\rangle$, and then the
transformations associated with the symbols in the word  $x_1\dots
x_n\$$ are applied in succession. The transformation associated
with any symbol $\sigma\in\Gamma$ consists of two steps:
\begin{enumerate}
\item[1.] First, $U(\sigma)$ is applied to the current state
$|\phi\rangle$ of ${\cal M}$, yielding the new state
$|\phi^{'}\rangle=U(\sigma)|\phi\rangle$.

\item[2.] Second, the observable ${\cal O}$ is measured on
$|\phi^{'}\rangle$. According to quantum mechanics principle, this
measurement yields result $c_k$ with probability
$p_k=||P(c_k)|\phi^{'}\rangle||^2$, and the state of ${\cal M}$
collapses to $P(c_k)|\phi^{'}\rangle  /\sqrt{p_k}$.
\end{enumerate}

Thus, the computation on word $x_1\dots x_n\$$ leads to a sequence
$y_1\dots y_{n+1}\in {\cal C}^{*}$ with probability $p(y_1\dots
y_{n+1}|x_1\dots x_n\$)$ given by
\begin{equation}
p(y_1\dots y_{n+1}|x_1\dots
x_n\$)=\|  \prod^{n+1}_{i=1}P(y_i)    U(x_i)              |\psi_{0}\rangle\|^2,
\end{equation}
where we let $x_{n+1}=\$$ as stated before. A computation leading
to the word $y\in {\cal C}^{*}$ is said to be  accepted if $y\in
{\cal L}$. Otherwise, it is rejected. Hence, the accepting probability of 1QFACL ${\cal M}$ is
defined as:
\begin{equation}
{\cal P}_{\cal M}(x_1\dots x_n)=\sum_{y_1\dots y_{n+1}\in {\cal
L}}p(y_1\dots y_{n+1}|x_1\dots x_n\$).\label{f_CL}
\end{equation}

\subsection{One-way quantum automata together with classical states}

In the introduction we gave the motivation for introducing the new one-way quantum finite automata model, i.e., 1QFAC. We now define formally the  model.
To this end, we need the following notations. Given a finite set $B$, we denote by ${\cal H}(B)$ the Hilbert space freely generated by $B$. Furthermore, we denote by $I$ and $O$ the identity operator and zero operator on ${\cal H}(Q)$, respectively.

\begin{Df}\em \label{Df1}
A 1QFAC ${\cal A}$ is defined by a 9-tuple
\[
{\cal A}=(S,Q,\Sigma,\Gamma, s_{0},|\psi_{0}\rangle, \delta,
\mathbb{U}, {\cal M})
\]
where:
\begin{itemize}
\item $\Sigma$  is a finite set (the {\it input alphabet});

\item $\Gamma$  is a finite set (the {\it output alphabet});

\item $S$ is a finite set (the set of {\em classical states});

\item $Q$ is a finite set (the {\em quantum state basis});

\item $s_{0}$ is an element of $S$ (the {\em initial classical state});

\item $|\psi_{0}\rangle$ is a unit vector in the Hilbert space ${\cal H}(Q)$ (the {\em
initial quantum state});

\item $\delta: S\times \Sigma\rightarrow S$ is
a map (the {\em classical transition map});

\item $\mathbb{U}=\{U_{s\sigma}\}_{s\in S,\sigma\in \Sigma}$ where $U_{s\sigma}:{\cal
H}(Q)\rightarrow {\cal H}(Q)$ is a unitary operator for each $s$
and $\sigma$ (the {\em quantum transition operator} at $s$ and $\sigma$);

\item ${\cal M}=\{{\cal M}_s\}_{s\in S}$  where each ${\cal M}_s$ is a projective measurement over ${\cal H}(Q)$ with outcomes in $\Gamma$ (the {\em measurement operator at} $s$).
\end{itemize}
\end{Df}

Hence, each ${\cal M}_s= \{P_{s,\gamma}\}_{\gamma\in \Gamma}$ such that
$\sum_{\gamma\in \Gamma}P_{s,\gamma}=I$ and
$P_{s,\gamma}P_{s,\gamma'}=\left\{\begin{array}{ll}P_{s,\gamma},&
\gamma=\gamma',\\
O,& \gamma\not=\gamma'.
\end{array}
\right.$
Furthermore, if the machine is
in classical state $s$ and quantum state $|\psi\rangle$ after
reading the input string, then  $\|P_{s,\gamma}|\psi\rangle\|^{2}$
is the probability of the machine producing
outcome $\gamma$ on that input.

Note that the map
$\delta$ can be extended to a map $\delta^{*}:
\Sigma^{*}\rightarrow S$ as usual. That is,
$\delta^{*}(s,\epsilon)=s$; for any string $x\in\Sigma^{*}$ and
any $\sigma\in \Sigma$, $\delta^{*}(s,\sigma x)=
\delta^{*}(\delta(s,\sigma),x)$.

A specially interesting case of the above definition is when $\Gamma=\{a, r\}$,
    where $a$ denotes  {\em accepting} and $r$ denotes {\em rejecting}.
Then, ${\cal M}=\{\{P_{s,a},P_{s,r}\}:s\in S\}$ and, for each
$s\in S$, $P_{s,a}$ and $P_{s,r}$ are two projectors such that $P_{s,a}+P_{s,r}=I$ and $P_{s,a}P_{s,r}=
O$. In this case, ${\cal A}$ is an
acceptor of languages over $\Sigma$.

For the sake of convenience, we denote the map $\mu : \Sigma^{*}
\rightarrow S$, induced by $\delta$, as
$\mu(x)=\delta^{*}(s_{0},x)$ for any string $x\in\Sigma^{*}$.

We further describe the computing process of ${\cal
A}=(S,Q,\Sigma, s_{0},|\psi_{0}\rangle, \delta,
\mathbb{U},{\cal M})$ for input string
$x=\sigma_{1}\sigma_{2}\cdots\sigma_{m}$ where $\sigma_{i}\in
\Sigma$ for $i=1,2,\cdots,m$.

The machine ${\cal A}$ starts at the
initial classical state $s_{0}$ and initial quantum state
$|\psi_{0}\rangle$. On reading the first symbol $\sigma_{1}$ of the input string, the states of the machine change as follows: the classical state becomes
$\mu(\sigma_{1})$; the quantum state becomes $U_{s_{0}\sigma_{1}}|\psi_{0}\rangle$.
Afterward, on reading $\sigma_{2}$, the machine changes its classical state to $\mu(\sigma_{1}\sigma_{2})$ and its quantum state to the result of applying $U_{\mu(\sigma_{1})\sigma_{2}}$ to
$U_{s_{0}\sigma_{1}}|\psi_{0}\rangle$.

The process continues
similarly by reading $\sigma_{3},\sigma_{4},\cdots,\sigma_{m}$ in succession.
Therefore, after reading $\sigma_{m}$, the classical state becomes
$\mu(x)$ and the quantum state is as follows:
\begin{equation}
U_{\mu( \sigma_{1}\cdots \sigma_{m-2}\sigma_{m-1} )\sigma_{m}}U_{\mu(\sigma_{1}\cdots \sigma_{m-3}\sigma_{m-2})\sigma_{m-1}}\cdots U_{\mu(\sigma_{1})\sigma_{2}}
U_{s_{0}\sigma_{1}}|\psi_{0}\rangle.
\end{equation}

Let ${\cal U}(Q)$ be the set of unitary operators on Hilbert space
${\cal H}(Q)$. For the sake of convenience, we denote the map
$v:\Sigma^{*}\rightarrow {\cal U}(Q)$ as: $v(\epsilon)=I$ and
\begin{equation}
v(x)=
U_{\mu( \sigma_{1}\cdots \sigma_{m-2}\sigma_{m-1} )\sigma_{m}}U_{\mu(\sigma_{1}\cdots \sigma_{m-3}\sigma_{m-2})\sigma_{m-1}}\cdots U_{\mu(\sigma_{1})\sigma_{2}}
U_{s_{0}\sigma_{1}} \label{v}
\end{equation}
for $x=\sigma_{1}\sigma_{2}\cdots\sigma_{m}$ where $\sigma_{i}\in
\Sigma$ for $i=1,2,\cdots,m$, and $I$ denotes the identity
operator on ${\cal H}(Q)$, indicated as before.

By means of the denotations $\mu$ and $v$, for any input string
$x\in\Sigma^{*}$, after ${\cal A}$ reading $x$, the classical
state is $\mu(x)$ and the quantum states $v(x)|\psi_{0}\rangle$.

Finally,  the probability $\Prob_{{\cal A},\gamma}(x)$ of machine
${\cal A}$ producing result $\gamma$ on input $x$  is as follows:
\begin{equation}
\Prob_{{\cal A},\gamma}(x)= \|P_{\mu(x),\gamma}v(x)|\psi_{0}\rangle\|^{2}.
\end{equation}

In particular, when ${\cal A}$ is thought of as an acceptor of
languages over $\Sigma$ ($\Gamma=\{a,r\}$), we obtain the
probability $\Prob_{{\cal A},a}(x)$ for accepting $x$:
\begin{equation}
\Prob_{{\cal A},a}(x)= \|P_{\mu(x),a}v(x)|\psi_{0}\rangle\|^{2}.
\end{equation}

If a 1QFAC ${\cal A}$ has only one classical state,
then ${\cal A}$ reduces to an MO-1QFA \cite{MC00}. Therefore, the
set of languages accepted by  1QFAC with only one classical state
is a proper subset of regular languages, the languages
whose syntactic monoid is a group  \cite{BP02}. However, we revisit here the result obtained in \cite{dqiu:liu:pmat:acs:13} that 1QFAC
can accept all regular languages with no error.

\begin{Pp}\em  Let $\Sigma$ be a finite set. Then each regular language over  $\Sigma$ that is accepted by a minimal DFA of $k$ states
is also accepted by some 1QFAC with no error and with 1 quantum basis state and $k$ classical states.
\end{Pp}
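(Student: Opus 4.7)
The plan is to simulate the given DFA using the classical component of the 1QFAC, making the quantum component essentially trivial. Concretely, let $M = (S, \Sigma, \delta, s_0, F)$ be the minimal DFA with $|S| = k$ states accepting the language $L$. I would construct a 1QFAC $\mathcal{A} = (S, Q, \Sigma, \Gamma, s_0, |\psi_0\rangle, \delta, \mathbb{U}, \mathcal{M})$ that reuses $S$, $\Sigma$, $s_0$, and $\delta$ verbatim from $M$, and takes $\Gamma = \{a, r\}$, a singleton quantum basis $Q = \{q_0\}$, and initial quantum state $|\psi_0\rangle = |q_0\rangle$.

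Next I would specify the quantum ingredients. Since $\mathcal{H}(Q)$ is one-dimensional, every unitary on it is just multiplication by a complex number of modulus one; I would set $U_{s\sigma} = I$ for every $s \in S$ and $\sigma \in \Sigma$, so that $v(x)|\psi_0\rangle = |q_0\rangle$ for every input $x \in \Sigma^*$. For the measurement family $\mathcal{M} = \{\mathcal{M}_s\}_{s \in S}$, I would put $P_{s,a} = I$ and $P_{s,r} = O$ whenever $s \in F$, and $P_{s,a} = O$ and $P_{s,r} = I$ whenever $s \notin F$. These clearly satisfy $P_{s,a} + P_{s,r} = I$ and $P_{s,a} P_{s,r} = O$, so $\mathcal{M}_s$ is a valid projective measurement for each classical state $s$.

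It then remains to verify correctness. By construction, for any input $x \in \Sigma^*$ the extended classical map gives $\mu(x) = \delta^*(s_0, x)$, which is exactly the DFA state reached by reading $x$. Combining this with $v(x)|\psi_0\rangle = |q_0\rangle$, the accepting probability computed via
\[
\Prob_{\mathcal{A}, a}(x) = \|P_{\mu(x), a}\, v(x)|\psi_0\rangle\|^2
\]
equals $1$ if $\mu(x) \in F$ and $0$ otherwise. Hence $\mathcal{A}$ accepts $x$ with probability $1$ when $x \in L$ and with probability $0$ when $x \notin L$, i.e., it recognises $L$ with no error while using only $k$ classical states and a single quantum basis state.

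There is essentially no obstacle in this proof: the only subtlety is confirming that the Definition~\ref{Df1} framework admits a one-dimensional $\mathcal{H}(Q)$ together with the degenerate measurement assignments above, which it does, since the axioms on $\mathcal{M}_s$ are trivially satisfied when one projector is $I$ and the other is $O$. The construction therefore shows that classical control alone suffices to lift 1QFAC beyond the group-language barrier that constrains MO-1QFA.
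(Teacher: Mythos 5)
Your construction is correct and is essentially identical to the paper's proof: both take the DFA's classical structure verbatim, a one-dimensional quantum space with all unitaries equal to the identity, and measurements with $P_{s,a}=|q_0\rangle\langle q_0|$ (which is $I$ on this space) exactly when $s\in F$. The verification of the accepting probability is the same in both.
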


\begin{proof}
Let $L\subseteq \Sigma^{*}$ be a regular language. Then there
exists a DFA
$M=(S,\Sigma,\delta,s_{0},F)$ accepting $L$, where, as usual, $S$
is a finite set of states, $s_{0}\in S$ is an initial state,
$F\subseteq Q$ is a set of accepting states, and $\delta: Q\times
\Sigma\rightarrow Q$ is the transition function. We construct a
1QFAC ${\cal A}=(S,Q,\Sigma,\Gamma, s_{0},|\psi_{0}\rangle,
\delta, \mathbb{U}, {\cal M})$ accepting $L$ without error, where
$S$,  $\Sigma$, $s_{0}$, and $\delta$ are the same as those in
$M$, and, in addition, $\Gamma=\{a,r\}$, $Q=\{0\}$,
$|\psi_{0}\rangle=|0\rangle$, $\mathbb{U}=\{U_{s\sigma}: s\in
S,\sigma\in\Sigma\}$ with $U_{s\sigma}=I$ for all $s\in S$ and
$\sigma\in\Sigma$, ${\cal M}=\{\{P_{s,a},P_{s,r}\}:s\in S\}$
assigned as: if $s\in F$, then $P_{s,a}=|0\rangle\langle 0|$ and
$P_{s,r}=O$ where $O$ denotes the zero operator as before; otherwise,
$P_{s,a}=O$ and $P_{s,r}=|0\rangle\langle 0|$.

By the above definition of 1QFAC ${\cal A}$, it is easy to check
that the language accepted by ${\cal A}$ with no error is exactly
$L$.
\end{proof}

Observe that for any regular language $L$ over $\{0,1\}$ accepted by a $k$ state DFA, it was proved that there exists a 1QFACL accepting $L$ with no error and with $3k$ classical states ($3k$ is the number of states of its minimal DFA accepting the control language)  and 3 quantum basis states \cite{MP06}. Here, for 1QFAC, we require only $k$ classical states and 1 quantum basis states. Therefore, in this case, 1QFAC have better state complexity than 1QFACL.

On the other hand, any language accepted by a 1QFAC
is regular. We can prove this result in detail, based on a well-know idea for one-way probabilistic automata by Rabin \cite{Rab63}, that was already applied for MM-1QFA by Kondacs and Watrous \cite{KW97} as well as for MO-1QFA by Brodsky and Pippenger \cite{BP02}. However, the process is much longer and further results are needed, since both classical and quantum states are involved in 1QFAC. Another possible approach is based on topological automata \cite{Boz03,Jea07}.  However, in next section we obtain this result while studying the state complexity of 1QFAC and so we postpone the proof of regularity to the next section.

\subsection{State complexity of 1QFAC}
State complexity of classical finite automata has been a hot research subject with important practical applications \cite{Yu98}. In this section,  we consider this problem for 1QFAC. First, we  prove a lower bound on the state complexity of 1QFAC which states that 1QFAC are at most exponentially more concise than DFA. Second, we show that our bound is tight by giving some languages that witness the exponential advantage of 1QFAC over DFA. Particularly, these languages can not be accepted by any MO-1QFA, MM-1QFA or multi-letter 1QFA.

Here we prove a lower bound for the state complexity of 1QFAC which states that 1QFAC are at most exponentially more concise than DFA. Also, we show that the languages accepted by 1QFAC with bounded error are regular. Some examples given in the next subsection shows that our lower bound is tight.

Given a 1QFAC ${\cal A}=(S,Q,\Sigma,\Gamma, s_{0},|\psi_{0}\rangle, \delta,
\mathbb{U}, {\cal M})$, we shall consider the triple $$({\cal H}, |\phi_0\rangle, \{M(\sigma): \sigma\in\Sigma\}, \{P_\gamma: \gamma\in\Gamma\})$$ 
where
\begin{itemize}
  \item  ${\cal H}={\cal H}(S)\otimes {\cal H}(Q)$;
   \item $|\phi_0\rangle=|s_0\rangle|\psi_0\rangle$;
  \item $M(\sigma)=\sum_{s\in S}|\delta(s,\sigma)\rangle\langle s|\otimes U_{s\sigma}$ for $\sigma\in\Sigma$;
  \item $P_\gamma=\sum_{s\in S}|s\rangle\langle s|\otimes P_{s\gamma}$ for each $\gamma\in\Gamma$.
\end{itemize}
It is easy to verify that \begin{equation}\Prob_{{\cal A},\gamma}(x)=\|P_\gamma M(x)|\phi_0\rangle\|^2 \label{Prob}\end{equation} for each $\gamma\in \Gamma$ and $x\in\Sigma^*$, where $M(x_1\cdots x_n)=M(x_n)\cdots M(x_1)$.
Furthermore, we let
\begin{equation} {\cal V}=\{|\phi_x\rangle: |\phi_x\rangle=M(x)|\phi_0\rangle, x\in\Sigma^*\}. \label{df-v}\end{equation}
Then we have the following result.
\begin{Lm}\label{Lm-v}\em It holds that
\begin{itemize}
  \item [(i)] each $|\phi\rangle\in{\cal V}$ has the form $|\phi\rangle=|s\rangle|\psi\rangle$ where $s\in S$ and $|\psi\rangle\in {\cal H}(Q)$;
  \item [(ii)] $\||\phi\rangle\|^2=1$ for all $|\phi\rangle\in{\cal V}$;
  \item [(iii)] $\|M(x)|\phi_1\rangle-M(x)|\phi_2\rangle\|\leq\sqrt{2}\||\phi_1\rangle-|\phi_2\rangle\|$ for all $x\in\Sigma^*$.
\end{itemize}
\end{Lm}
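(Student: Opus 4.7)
The plan is to treat the three items in turn, with (i) and (ii) amounting to routine inductions and (iii) requiring a small case split on whether the classical components coincide.

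For (i), I would proceed by induction on $|x|$. The base case is immediate, since $|\phi_0\rangle = |s_0\rangle|\psi_0\rangle$. For the inductive step, assume $|\phi_y\rangle = |s\rangle|\psi\rangle$ is in product form. Then for any $\sigma\in\Sigma$, applying the definition of $M(\sigma)$ gives
\[
M(\sigma)|\phi_y\rangle \;=\; \sum_{s'\in S} |\delta(s',\sigma)\rangle\langle s'|s\rangle \otimes U_{s'\sigma}|\psi\rangle \;=\; |\delta(s,\sigma)\rangle \otimes U_{s\sigma}|\psi\rangle,
\]
which is again a product state. For (ii), the same induction shows that the quantum component $|\psi\rangle$ of any $|\phi\rangle\in\mathcal{V}$ is obtained by successively applying unitary operators $U_{s'\sigma}$ to the unit vector $|\psi_0\rangle$, hence has norm one; since $|s\rangle$ is a basis vector, $\||\phi\rangle\|=1$.

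For (iii), I would write $|\phi_i\rangle = |s_i\rangle|\psi_i\rangle$ using (i), and split into two cases. If $s_1 = s_2 = s$, then iterating the computation from (i) shows that $M(x)$ acts on both vectors by the same sequence of unitaries (because the classical trajectory $\delta^{*}(s,\cdot)$ is the same in both branches), yielding $|s'\rangle\otimes U_x|\psi_i\rangle$ for a common $s'$ and a common unitary $U_x$; consequently
\[
\|M(x)|\phi_1\rangle - M(x)|\phi_2\rangle\| \;=\; \|U_x(|\psi_1\rangle - |\psi_2\rangle)\| \;=\; \||\psi_1\rangle - |\psi_2\rangle\| \;=\; \||\phi_1\rangle - |\phi_2\rangle\|.
\]
If instead $s_1 \neq s_2$, then $\langle s_1|s_2\rangle = 0$ gives $\||\phi_1\rangle - |\phi_2\rangle\|^2 = 2$, while $M(x)|\phi_1\rangle$ and $M(x)|\phi_2\rangle$ are unit vectors by (ii), so the triangle inequality yields $\|M(x)|\phi_1\rangle - M(x)|\phi_2\rangle\| \leq 2 = \sqrt{2}\cdot\sqrt{2} = \sqrt{2}\,\||\phi_1\rangle - |\phi_2\rangle\|$. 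Combining both cases delivers the desired bound.

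The main obstacle, if there is one, is the second case of (iii): one has to recognise that the factor $\sqrt{2}$ is essentially tight and arises when two orthogonal classical branches collide into the same classical state under $\delta^{*}(\cdot,x)$, at which point the quantum parts can become anti-aligned. Once this observation is in hand, the rest is a direct calculation from the product structure established in (i) and the unit-norm property established in (ii).
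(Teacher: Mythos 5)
Your proof is correct and follows essentially the same route as the paper: product form and unit norm by induction, then a case split on whether $s_1=s_2$, with the orthogonal case giving $\||\phi_1\rangle-|\phi_2\rangle\|=\sqrt{2}$ and the image difference bounded by $2$. Your use of the triangle inequality on the two unit vectors is a slight streamlining of the paper's further sub-case on whether $s'_1=s'_2$, but the substance is identical.
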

\begin{proof} Items (i) and (ii) are easy to be verified. In the following, we prove item (iii). Let $|\phi_i\rangle=|s_i\rangle|\psi_i\rangle$ and $|\phi'_i\rangle=M(x)|\phi_i\rangle=|s'_i\rangle|\psi'_i\rangle$ for $i=1,2$ and $x\in \Sigma^*$, where $s_i, s'_i\in S$ and $|\psi_i\rangle, |\psi'_i\rangle\in {\cal H}(Q)$. The  discussion is divided into two cases.

\noindent
Case (a): $|s_1\rangle=|s_2\rangle$. In this case it necessarily holds that $|s'_1\rangle=|s'_2\rangle$ and furthermore we have
\begin{equation}
\||\phi'_1\rangle-\phi'_2\rangle\|=\||\psi'_1\rangle-|\psi'_2\rangle\|=\||\psi_1\rangle-|\psi_2\rangle\|=\||\phi_1\rangle-|\phi_2\rangle\|,
\end{equation}
where the first and third equations hold because of $\||\alpha\rangle|\beta\rangle\|=\||\alpha\rangle\|.\||\beta\rangle\|$ and the second holds since $|\psi'_1\rangle$ and $|\psi'_2\rangle$ are obtained by performing the same unitary operation on $|\psi_1\rangle$ and $|\psi_2\rangle$, respectively.

\noindent
Case (b): $|s_1\rangle\neq|s_2\rangle$. First it holds that $\||\phi_1\rangle-|\phi_2\rangle\|=\sqrt{2}$. Indeed, let $|\psi_1\rangle=\sum_i\alpha_i|i\rangle$ and $|\psi_2\rangle=\sum_i\beta_i|i\rangle$. Then, we have
\begin{eqnarray}\||\phi_1\rangle-|\phi_2\rangle\|&=&\||s_1\rangle|\psi_1\rangle-|s_2\rangle|\psi_2\rangle\|\\
&=&\left\|\sum_i\alpha_i|s_1\rangle|i\rangle+ \sum_i(-\beta_i)|s_2\rangle|i\rangle\right\|\\
&=& \left(\sum_i |\alpha_i|^2+\sum_i |\beta_i|^2\right)^{\frac{1}{2}}\\
&=&\sqrt{\||\psi_1\rangle\|^2+\||\psi_1\rangle\|^2}\\
&=&\sqrt{2}.
\end{eqnarray}
Therefore, \begin{eqnarray}
\||\phi'_1\rangle-\phi'_2\rangle||&=&\||s'_1\rangle|\psi'_1\rangle-|s'_2\rangle|\psi'_2\rangle\|\\
&=&\left\{
     \begin{array}{ll}
       \|\psi'_1\rangle-|\psi'_2\rangle\|, & \hbox{if $s'_1= s'_2$;} \\
       \sqrt{2}, & \hbox{else.}
     \end{array}
   \right.
\end{eqnarray}
Note that $ \|\psi'_1\rangle-|\psi'_2\rangle\|\leq 2=\sqrt{2}\||\phi_1\rangle-|\phi_2\rangle\|$.

 In summary, item (iii) holds in any case.
\end{proof}

Next we present another lemma which is critical for obtaining the lower bound on 1QFAC.
\begin{Lm} \label{Lm-s}\em 
Let ${\cal V}_\theta\subseteq \mathbb{C}^n$ such that $\||\phi_1\rangle-|\phi_2\rangle\|\geq \theta$ for any two elements $|\phi_1\rangle,|\phi_2\rangle\in {\cal V}_\theta$. Then ${\cal V}_\theta$ is a finite set containing  $k(\theta)$ elements where $k(\theta)\leq(1+\frac{2}{\theta})^{2n}$.
\end{Lm}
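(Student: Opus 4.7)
The plan is a standard volume/packing argument in $\mathbb{R}^{2n}$, after first clarifying that the lemma is to be applied to sets of \emph{unit} vectors (this is the only setting in which the conclusion can hold, and it is the one supplied by Lemma \ref{Lm-v}(ii); otherwise one could translate $\mathcal{V}_\theta$ arbitrarily far and still maintain pairwise $\theta$-separation).

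First, I would identify $\mathbb{C}^n$ with $\mathbb{R}^{2n}$ equipped with the standard Euclidean norm, so that the given condition $\||\phi_1\rangle-|\phi_2\rangle\|\geq \theta$ becomes a pairwise-separation statement between points in $\mathbb{R}^{2n}$, and every unit vector sits inside the closed unit ball $B(0,1)\subset \mathbb{R}^{2n}$. Around each point $|\phi\rangle\in\mathcal{V}_\theta$, I would consider the open ball $B(|\phi\rangle,\theta/2)$ of radius $\theta/2$. The separation hypothesis immediately implies that these balls are pairwise disjoint: if $|\phi_1\rangle\neq|\phi_2\rangle$ were both in $\mathcal{V}_\theta$ and shared a point of the two $\theta/2$-balls, the triangle inequality would give $\||\phi_1\rangle-|\phi_2\rangle\|<\theta$, a contradiction.

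Next, I would observe that each such ball lies entirely inside the larger ball $B(0,1+\theta/2)$, because for any $|x\rangle\in B(|\phi\rangle,\theta/2)$ the triangle inequality yields $\|\,|x\rangle\,\|\leq \|\,|\phi\rangle\,\|+\theta/2\leq 1+\theta/2$. Taking (Lebesgue) volumes in $\mathbb{R}^{2n}$, and writing $V_{2n}(r)=c_{2n}r^{2n}$ for the volume of a Euclidean ball of radius $r$, disjointness gives
\begin{equation*}
|\mathcal{V}_\theta|\cdot c_{2n}(\theta/2)^{2n} \;=\; \sum_{|\phi\rangle\in\mathcal{V}_\theta}\mathrm{vol}\bigl(B(|\phi\rangle,\theta/2)\bigr) \;\leq\; \mathrm{vol}\bigl(B(0,1+\theta/2)\bigr) \;=\; c_{2n}(1+\theta/2)^{2n}.
\end{equation*}
Cancelling $c_{2n}$ and rearranging produces
\begin{equation*}
|\mathcal{V}_\theta| \;\leq\; \left(\frac{1+\theta/2}{\theta/2}\right)^{2n} \;=\; \left(1+\frac{2}{\theta}\right)^{2n},
\end{equation*}
which is exactly the desired bound $k(\theta)\leq(1+2/\theta)^{2n}$, and in particular proves that $\mathcal{V}_\theta$ is finite.

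There is no genuinely hard step here; the only subtle point is to be explicit about (or to verify from context) the unit-norm hypothesis on the elements of $\mathcal{V}_\theta$, since without it the stated bound is simply false. Once that is pinned down, the argument is a textbook sphere-packing count and the constant $(1+2/\theta)^{2n}$ falls out of the volume ratio without any further computation.
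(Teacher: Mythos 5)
Your proof is correct and follows essentially the same sphere-packing/volume-ratio argument as the paper: disjoint balls of radius $\theta/2$ around each point, all contained in a ball of radius $1+\theta/2$, giving the bound $\left(1+\frac{2}{\theta}\right)^{2n}$. Your explicit remark that the elements of ${\cal V}_\theta$ must be unit (or at least norm-bounded) vectors is a worthwhile clarification of a hypothesis the paper leaves implicit but does rely on, via Lemma \ref{Lm-v}(ii), when it encloses everything in the ball of radius $1+\frac{\theta}{2}$ about the origin.
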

\begin{proof} Arbitrarily choose an element $|\phi\rangle\in {\cal V}_\theta$. Let $U(|\phi\rangle, \frac{\theta}{2})=\{|\chi\rangle: \||\chi\rangle-|\phi\rangle\|\leq\frac{\theta}{2}\}$, i.e., a sphere centered at $|\phi\rangle$ with the radius $\frac{\theta}{2}$. Then  all these spheres do not intersect pairwise except for  their surface, and all of them are contained in a large sphere centered at $(0,0,\cdots,0)$ with the radius $1+\frac{\theta}{2}$. The volume of a sphere of a radius $r$ in $\mathbb{C}^n$ is $cr^{2n}$ where $c$ is a constant. Note that $\mathbb{C}^n$ is an $n$-dimensional complex space and each element from it can be represented by an element of $\mathbb{R}^{2n}$. Therefore, it holds that
\begin{equation}
k(\theta)\leq\frac{c(1+\frac{\theta}{2})^{2n}}{c(\frac{\theta}{2})^{2n}}=\left(1+\frac{2}{\theta}\right)^{2n}.
\end{equation}
\end{proof}
\noindent
Below we recall a result that will be used later on (c.f. Lemma 8 in \cite{Ya03} for a complete proof).
\begin{Lm}\label{Lm-p}\em 
For any two elements $|\phi\rangle, |\varphi\rangle\in \mathbb{C}^n$ with $\||\phi\rangle\|\leq c$ and $\||\varphi\rangle\|\leq c$, it holds that $\left | \|P|\phi\rangle\|^2- \|P|\varphi\rangle\|^2   \right|\leq c\||\phi\rangle-|\varphi\rangle\|$ where $P$ is a projective operator on $\mathbb{C}^n$.
\end{Lm}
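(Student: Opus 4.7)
The plan is to bound $\bigl|\|P|\phi\rangle\|^2-\|P|\varphi\rangle\|^2\bigr|$ by exploiting the elementary factorisation $a^2-b^2=(a-b)(a+b)$ together with the fact that an orthogonal projector is a contraction (i.e.\ $\|P\|\leq 1$, since $P^\dagger=P=P^2$). Setting $a=\|P|\phi\rangle\|$ and $b=\|P|\varphi\rangle\|$, I would then control the two factors separately.

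For the difference $|a-b|$, the reverse triangle inequality together with $\|P\|\leq 1$ gives
\[
|a-b| \;\leq\; \|P|\phi\rangle-P|\varphi\rangle\| \;=\; \|P(|\phi\rangle-|\varphi\rangle)\| \;\leq\; \||\phi\rangle-|\varphi\rangle\|.
\]
For the sum $a+b$, the same contraction property applied to each term yields $a\leq\||\phi\rangle\|\leq c$ and $b\leq\||\varphi\rangle\|\leq c$. Multiplying the two estimates immediately produces the inequality in the statement, up to a fixed absolute constant.

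The main obstacle, if one wishes to recover exactly the constant $c$ appearing in the statement (rather than $2c$), is to be slightly more careful. A cleaner route is to set $|\mu\rangle=\tfrac{1}{2}(|\phi\rangle+|\varphi\rangle)$ and $|\nu\rangle=\tfrac{1}{2}(|\phi\rangle-|\varphi\rangle)$ and compute
\[
\|P|\phi\rangle\|^2-\|P|\varphi\rangle\|^2 \;=\; \langle\phi|P|\phi\rangle - \langle\varphi|P|\varphi\rangle \;=\; 4\,\mathrm{Re}\,\langle\mu|P|\nu\rangle,
\]
after which Cauchy--Schwarz combined with $\|P|\mu\rangle\|\leq\||\mu\rangle\|\leq c$ and $\||\nu\rangle\|=\tfrac{1}{2}\||\phi\rangle-|\varphi\rangle\|$ gives the desired bound with the sharper constant. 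In either form, the computational content reduces to the reverse triangle inequality plus the two basic identities $P^\dagger=P=P^2$, so no genuine difficulty is expected; and because the lemma is only invoked downstream to compare acceptance probabilities arising from closeness of quantum states in the state-complexity argument, only the order of magnitude of the constant is material.
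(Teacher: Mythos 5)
The paper does not actually prove this lemma: it is quoted from Yamakami (Lemma~8 of \cite{Ya03}), with an explicit pointer to that paper for the proof, so your argument is the only one on the table and must be judged on its own. Your first, elementary route is sound as far as it goes: $P^\dagger=P=P^2$ gives $\|P\|\leq 1$, the reverse triangle inequality gives $\bigl|\,\|P|\phi\rangle\|-\|P|\varphi\rangle\|\,\bigr|\leq\||\phi\rangle-|\varphi\rangle\|$, and the factorisation $a^2-b^2=(a-b)(a+b)$ yields the bound with constant $2c$. The gap is in the claimed refinement. The polarisation identity $\|P|\phi\rangle\|^2-\|P|\varphi\rangle\|^2=4\,\mathrm{Re}\,\langle\mu|P|\nu\rangle$ is correct, but pushing it through Cauchy--Schwarz gives $4\,\|P|\mu\rangle\|\,\||\nu\rangle\|\leq 4\cdot c\cdot\tfrac12\||\phi\rangle-|\varphi\rangle\|=2c\,\||\phi\rangle-|\varphi\rangle\|$: the factor $4$ is only half absorbed by $\||\nu\rangle\|=\tfrac12\||\phi\rangle-|\varphi\rangle\|$, so you land on $2c$ again, not $c$.

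This is not a fixable slip, because the inequality with constant $c$ is false as literally stated. Take $n=1$, $P=I$ (a legitimate projector), $|\phi\rangle=c$ and $|\varphi\rangle=c-\epsilon$ with $0<\epsilon<c$: the left-hand side is $2c\epsilon-\epsilon^2$ while the right-hand side is $c\epsilon$, and the ratio tends to $2$ as $\epsilon\to 0$. The sharp general bound is $\bigl(\||\phi\rangle\|+\||\varphi\rangle\|\bigr)\,\||\phi\rangle-|\varphi\rangle\|\leq 2c\,\||\phi\rangle-|\varphi\rangle\|$, which is exactly what both of your computations deliver; the constant-$c$ form can only be recovered under an extra hypothesis such as $\||\phi\rangle\|=\||\varphi\rangle\|$ (which happens to cover the way the lemma is invoked in Theorem~\ref{bound}, where the two vectors are unit vectors and $c=1$, though even that case needs an argument beyond Cauchy--Schwarz). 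Your closing observation is the right one to lean on: downstream only the order of magnitude of the constant matters, so the $2c$ bound you do prove suffices for the state-complexity argument. You should therefore either state and prove the lemma with $2c$, or add the equal-norm hypothesis, rather than assert that the polarisation trick recovers the constant $c$.
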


Given a language $L\subseteq\Sigma^*$, define an equivalence relation ``$\equiv_L$'' as: for any  $x,y\in\Sigma^*$, $x\equiv_L y$ if
for any $z\in\Sigma^*$, $xz\in L$ iff $yz\in
L$. If $x,y$ do not satisfy the equivalence relation, we denote it by $x\not\equiv_L y$. Then  the set $\Sigma^*$ is partitioned into some equivalence classes by the equivalence relation ``$\equiv_L$''. In the following we recall a well-known result that will be used in the sequel.
\begin{Lm}[Myhill-Nerode theorem \cite{HU79}]\label{MN-Th}\em 
A language $L\subseteq\Sigma^*$ is regular iff the number of equivalence classes induced by the equivalence relation ``$\equiv_L$'' is finite. Furthermore, the number of  equivalence classes equals to the state number of the minimal DFA accepting $L$.
\end{Lm}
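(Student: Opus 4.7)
The plan is to prove the standard Myhill--Nerode theorem in the usual two-sided fashion: show that if $L$ is regular then $\equiv_L$ has finite index bounded by the size of any DFA for $L$, and conversely construct a DFA from the $\equiv_L$-classes whose state count matches that index. The ``furthermore'' clause then follows by showing that both bounds coincide with the index of $\equiv_L$.

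For the forward direction, assume $L$ is accepted by a DFA $M=(S,\Sigma,\delta,s_0,F)$. I would introduce an auxiliary equivalence $x \sim_M y$ iff $\delta^*(s_0,x)=\delta^*(s_0,y)$, which clearly has at most $|S|$ classes. The key observation is that $\sim_M$ refines $\equiv_L$: if $x \sim_M y$ then for every $z\in\Sigma^*$ we have $\delta^*(s_0,xz)=\delta^*(s_0,yz)$, so $xz\in L$ iff $yz\in L$, i.e., $x\equiv_L y$. Hence $\equiv_L$ has at most $|S|$ equivalence classes, and in particular is finite.

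For the backward direction, suppose $\equiv_L$ has finitely many classes. Writing $[x]$ for the class of $x$, I construct a DFA $M_L=(S',\Sigma,\delta',[\epsilon],F')$ with $S'=\Sigma^*/{\equiv_L}$, $\delta'([x],\sigma)=[x\sigma]$, and $F'=\{[x]: x\in L\}$. Both $\delta'$ and $F'$ must be shown well-defined: if $x\equiv_L y$ then for every $w$, $x\sigma w\in L$ iff $y\sigma w\in L$ (taking $z=\sigma w$ in the definition of $\equiv_L$), so $x\sigma\equiv_L y\sigma$; and $F'$ is well-defined by taking $z=\epsilon$. A straightforward induction on $|x|$ gives $\delta'^*([\epsilon],x)=[x]$, so $x$ is accepted iff $[x]\in F'$ iff $x\in L$, establishing regularity.

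For the final claim on minimality, the forward direction already shows that any DFA for $L$ has at least as many states as the index of $\equiv_L$, while the construction in the backward direction exhibits a DFA achieving that bound exactly; hence the minimal DFA for $L$ has precisely that many states. There is no serious obstacle here, since this is a classical textbook result; the only mild subtlety is keeping track that $\equiv_L$ is a right-congruence ($x\equiv_L y$ implies $x\sigma\equiv_L y\sigma$), which is what makes the transition function of $M_L$ well-defined and is the hinge on which both directions of the argument turn.
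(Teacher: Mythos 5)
Your proof is the standard and correct argument for the Myhill--Nerode theorem: the forward direction via the refinement of $\equiv_L$ by the reachable-state equivalence of any DFA, the backward direction via the quotient automaton (with the right-congruence property justifying well-definedness of the transition function), and minimality by combining the two bounds. The paper itself offers no proof of this lemma --- it is explicitly recalled as a well-known result with a citation to Hopcroft and Ullman --- so there is nothing to compare against; your argument fills that gap correctly and matches the textbook treatment the citation points to.
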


Now we are ready to present our main result.
\begin{thm} \label{bound}\em 
 If $L$ is accepted by a 1QFAC ${\cal M}$ with bounded error, then $L$ is regular and it holds that $kn=\Omega( log ~m)$ where $k$ and $n$ denote  numbers of classical states and quantum basis states of ${\cal M}$, respectively, and $m$ is the state number of the minimal DFA accepting $L$.
\end{thm}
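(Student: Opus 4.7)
The plan is to combine Lemmas \ref{Lm-v}, \ref{Lm-s}, and \ref{Lm-p} with the Myhill--Nerode theorem (Lemma \ref{MN-Th}) via a distinguishability argument against the equivalence relation $\equiv_L$. Since $\mathcal{M}$ accepts $L$ with bounded error, I fix a cut-point $\lambda$ and a gap $\epsilon>0$ such that $\Prob_{\mathcal{M},a}(w)\geq \lambda+\epsilon$ when $w\in L$ and $\Prob_{\mathcal{M},a}(w)\leq \lambda-\epsilon$ otherwise.

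First I would take two words $x,y\in\Sigma^*$ lying in distinct $\equiv_L$-classes. By definition there is $z\in\Sigma^*$ witnessing their inequivalence, so that $|\Prob_{\mathcal{M},a}(xz)-\Prob_{\mathcal{M},a}(yz)|\geq 2\epsilon$. Writing $|\phi_x\rangle=M(x)|\phi_0\rangle$ and using (\ref{Prob}), this inequality reads
$$\bigl|\,\|P_a M(z)|\phi_x\rangle\|^2-\|P_a M(z)|\phi_y\rangle\|^2\bigr|\;\geq\; 2\epsilon.$$
Applying Lemma \ref{Lm-p} (with $c=1$, which is valid since Lemma \ref{Lm-v}(ii) ensures that every vector in $\mathcal{V}$ is a unit vector) and then Lemma \ref{Lm-v}(iii) gives
$$2\epsilon \;\leq\; \|M(z)|\phi_x\rangle - M(z)|\phi_y\rangle\| \;\leq\; \sqrt{2}\,\||\phi_x\rangle - |\phi_y\rangle\|,$$
so $\||\phi_x\rangle-|\phi_y\rangle\|\geq \sqrt{2}\,\epsilon$.

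Next I would collect, for each $\equiv_L$-class, one representative vector $|\phi_x\rangle$; by the previous paragraph these vectors are pairwise $\sqrt{2}\,\epsilon$-separated inside the ambient Hilbert space $\mathcal{H}(S)\otimes\mathcal{H}(Q)$, which has complex dimension $kn$. Lemma \ref{Lm-s} applied with $\theta=\sqrt{2}\,\epsilon$ and dimension $kn$ therefore bounds the number of such representatives by $(1+\sqrt{2}/\epsilon)^{2kn}$. In particular the index of $\equiv_L$ is finite, so Lemma \ref{MN-Th} yields that $L$ is regular and that its minimal DFA has size $m\leq (1+\sqrt{2}/\epsilon)^{2kn}$. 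Taking logarithms gives $\log m \leq 2kn\,\log(1+\sqrt{2}/\epsilon)=O(kn)$, i.e.\ $kn=\Omega(\log m)$.

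The only delicate point is the $\sqrt{2}$ factor in Lemma \ref{Lm-v}(iii): it arises precisely when $\mu(x)\neq\mu(y)$ and must be absorbed by the probability gap $2\epsilon$ rather than compounded along the distinguishing suffix $z$. Since (iii) is a single-step (not iterated) contraction bound and is used only once here, no such compounding occurs and the argument proceeds cleanly.
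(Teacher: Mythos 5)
Your proposal is correct and follows essentially the same route as the paper's own proof: the same chain $2\epsilon \leq \|M(z)|\phi_x\rangle - M(z)|\phi_y\rangle\| \leq \sqrt{2}\,\||\phi_x\rangle-|\phi_y\rangle\|$ via Lemmas \ref{Lm-p} and \ref{Lm-v}, followed by the packing bound of Lemma \ref{Lm-s} in dimension $kn$ and Myhill--Nerode. Your explicit justification of $c=1$ via Lemma \ref{Lm-v}(ii) and the remark that the $\sqrt{2}$ factor is applied only once are both accurate and match the paper's (implicit) reasoning.
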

\begin{proof}
 Let ${\cal V}'\subseteq {\cal V}$ (where ${\cal V}$ is given in Eq. (\ref{df-v})) satisfying  for  any two elements $|\phi_{x}\rangle, |\phi_{y}\rangle\in {\cal V}'$  it holds that  $|\phi_{x}\rangle\not=|\phi_{y}\rangle\Leftrightarrow x\not\equiv_L y$.
Then for two different elements $|\phi_{x}\rangle, |\phi_{y}\rangle\in {\cal V}'$ there exists $z\in\Sigma^*$ satisfying $xz\in L$ whereas $yz\not\in L$ (or $xz\not\in L$ whereas $yz\in L$). That is
\begin{eqnarray} \Prob_{{\cal A},a}(xz)&=&||P_a M(z)|\phi_x\rangle||^2\geq\lambda+\epsilon,\\
\Prob_{{\cal A},a}(yz)&=&||P_a M(z)|\phi_y\rangle||^2\leq\lambda-\epsilon\end{eqnarray}
for some $\lambda\in(0,1]$ and $\epsilon>0$.
Therefore we have
\begin{eqnarray} \sqrt{2}\||\phi_x\rangle-|\phi_y\rangle\|&\geq &\|M(z)|\phi_x\rangle-M(z)|\phi_y\rangle\|\\
&\geq &|\Prob_{{\cal A},a}(xz)-\Prob_{{\cal A},a}(yz)|\\
&\geq &2\epsilon\end{eqnarray}
where the first inequality follows from Lemma \ref{Lm-v} and the second follows from Lemma \ref{Lm-p}.
  In summary, we obtain that two different elements $|\phi_{x}\rangle$ and $|\phi_{y}\rangle$ from ${\cal V}'$ satisfy $\||\phi_x\rangle-|\phi_y\rangle\|\geq \sqrt{2}\epsilon.$  Therefore, according to Lemma \ref{Lm-s}, we have that the number $|{\cal V}'|$ of elements in ${\cal V}'$ satisfies $|{\cal V}'|\leq(1+\frac{\sqrt{2}}{\epsilon})^{2kn} $, which means that the number of equivalence classes induced by  the equivalence relation ``$\equiv_L$'' is upper bounded by $(1+\frac{\sqrt{2}}{\epsilon})^{2kn}$. Therefore, by Lemma \ref{MN-Th} we have completed the proof. \end{proof}

When the number of classical states equals one in a 1QFAC  ${\cal M}$,   ${\cal M}$ exactly reduces to an MO-1QFA. Therefore,
as a corollary, we can obtain a precise relationship between the numbers of states for MO-1QFA and DFA that was also derived by Ablayev and Gainutdinova~\cite{AG00}.

\begin{Co}\em 
If $L$ is accepted by an MO-1QFA ${\cal M}$ with bounded error, then $L$ is regular and it holds that $n=\Omega( \log m)$ where  $n$ denotes the number of quantum basis states of ${\cal M}$, and $m$ is the state number of the minimal DFA accepting $L$.

\end{Co}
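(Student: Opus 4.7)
The plan is to derive this corollary as a direct specialization of Theorem \ref{bound}. The key observation is structural: an MO-1QFA is exactly a 1QFAC in which the classical state set $S$ is a singleton. Indeed, given an MO-1QFA $(Q,\Sigma,|\psi_0\rangle,\{U(\sigma)\}_{\sigma\in\Sigma},Q_{acc})$, I would construct the 1QFAC with $S=\{s_0\}$, the same $Q$ and $|\psi_0\rangle$, trivial classical transition $\delta(s_0,\sigma)=s_0$, quantum transitions $U_{s_0\sigma}=U(\sigma)$, and the unique measurement ${\cal M}_{s_0}=\{P_a,P_r\}$ with $P_a=\sum_{q\in Q_{acc}}|q\rangle\langle q|$ and $P_r=I-P_a$. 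One then checks, from the definition of $v(x)$ and $\Prob_{{\cal A},a}$, that the acceptance probability of this 1QFAC on any input $x$ coincides with the acceptance probability of the original MO-1QFA on $x$. Hence if $L$ is recognized with bounded error by an MO-1QFA with $n$ basis states, it is recognized with bounded error by a 1QFAC with $k=1$ classical state and $n$ quantum basis states.

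Having reduced to the previous theorem, I would now invoke Theorem \ref{bound} with $k=1$. The theorem guarantees that $L$ is regular and that $kn=\Omega(\log m)$, which in the present case collapses to $n=\Omega(\log m)$, yielding exactly the claimed bound.

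No step here is genuinely hard: the only point to be careful about is verifying that the constructed 1QFAC actually produces the same probability distribution over $\{a,r\}$ as the original MO-1QFA. This is a routine unfolding of the definition of $v(x)$ in equation (\ref{v}) together with the fact that, when $|S|=1$, the tensor factor ${\cal H}(S)$ is one-dimensional and $M(\sigma)$ acts on ${\cal H}(Q)$ exactly as $U(\sigma)$. Once this identification is in place, the bound follows immediately from the more general theorem, so the corollary requires no new analytic work.
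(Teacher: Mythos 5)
Your proposal is correct and follows exactly the paper's route: the paper also obtains this corollary by observing that a 1QFAC with a single classical state is precisely an MO-1QFA and then specializing Theorem~\ref{bound} to $k=1$. Your added verification that the acceptance probabilities coincide is a sound (if routine) elaboration of the same argument.
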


\subsection{The lower bound is tight}

Although 1QFAC accept only regular languages as  DFA, 1QFAC can accept some languages with essentially less number of states than DFA and these languages  cannot  be accepted by any MO-1QFA or MM-1QFA or multi-letter 1QFA. In this section, our purpose is to prove these claims, and we also obtain that the lower bound in Theorem \ref{bound} is tight.

First, we establish a technical result concerning the acceptability by 1QFAC of languages resulting from set operations on languages accepted by MO-1QFA and by DFA.

\begin{Lm} \label{operation}\em 

Let $\Sigma$ be a finite alphabet. Suppose that the language $L_{1}$ over  $\Sigma$ is accepted by a minimal DFA with $n_{1}$ states and  the language $L_{2}$ over  $\Sigma$ is accepted by an MO-1QFA with $n_{2}$ quantum basis states with bounded error $\epsilon$. Then  the intersection $L_{1}\cap L_{2}$, union $L_{1}\cup L_{2}$, differences $L_{1}\setminus L_{2}$ and $L_{2}\setminus L_{1}$ can be accepted by some 1QFAC  with  $n_{1}$ classical states and $n_{2}$ quantum basis states with bounded error $\epsilon$.
\end{Lm}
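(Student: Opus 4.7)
The plan is to use a straightforward product construction in which the classical transition structure of the 1QFAC mimics the DFA for $L_1$, while the quantum part faithfully mimics the MO-1QFA for $L_2$ independently of the current classical state, with the classical state used only at the end to decide which projective measurement to apply.

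Concretely, let $M_1=(S_1,\Sigma,\delta_1,s_0^1,F_1)$ be the minimal DFA with $n_1$ states accepting $L_1$, and let ${\cal A}_2=(Q,\Sigma,|\psi_0\rangle,\{U(\sigma)\}_{\sigma\in\Sigma},Q_{acc})$ be the MO-1QFA with $n_2=|Q|$ quantum basis states accepting $L_2$ with bounded error $\epsilon$. Define $P_a=\sum_{q\in Q_{acc}}|q\rangle\langle q|$ and $P_r=I-P_a$. I would then build the 1QFAC ${\cal A}=(S,Q,\Sigma,\Gamma,s_0,|\psi_0\rangle,\delta,\mathbb{U},{\cal M})$ by taking $S=S_1$, $s_0=s_0^1$, $\Gamma=\{a,r\}$, $\delta=\delta_1$, and $U_{s\sigma}=U(\sigma)$ for every $s\in S$ and $\sigma\in\Sigma$. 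Thus on reading $x=\sigma_1\cdots\sigma_m$ the classical state becomes $\mu(x)=\delta_1^*(s_0^1,x)$, which tells us whether $x\in L_1$, while the quantum state becomes exactly $|\psi_x\rangle=U(\sigma_m)\cdots U(\sigma_1)|\psi_0\rangle$, the final state of ${\cal A}_2$ on $x$. The measurement family ${\cal M}=\{\{P_{s,a},P_{s,r}\}\}_{s\in S}$ is then chosen according to the desired Boolean operation as follows:
\begin{itemize}
\item for $L_1\cap L_2$: set $P_{s,a}=P_a$ if $s\in F_1$, and $P_{s,a}=O$ otherwise;
\item for $L_1\cup L_2$: set $P_{s,a}=I$ if $s\in F_1$, and $P_{s,a}=P_a$ otherwise;
\item for $L_1\setminus L_2$: set $P_{s,a}=P_r$ if $s\in F_1$, and $P_{s,a}=O$ otherwise;
\item for $L_2\setminus L_1$: set $P_{s,a}=O$ if $s\in F_1$, and $P_{s,a}=P_a$ otherwise;
\end{itemize}
with $P_{s,r}=I-P_{s,a}$ in every case. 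Each pair clearly satisfies the required orthogonality and completeness conditions from Definition~\ref{Df1}.

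The verification then reduces to a short case analysis on the four operations. Using the identity $\Prob_{{\cal A},a}(x)=\|P_{\mu(x),a}|\psi_x\rangle\|^2$, in each case one splits according to whether $\mu(x)\in F_1$, i.e.\ whether $x\in L_1$, and in the branch that uses $P_a$ or $P_r$ one invokes the bounded-error guarantee of ${\cal A}_2$ to conclude that $\|P_a|\psi_x\rangle\|^2\ge 1-\epsilon$ when $x\in L_2$ and $\|P_a|\psi_x\rangle\|^2\le\epsilon$ when $x\notin L_2$; the branches using $I$ or $O$ give probability $1$ or $0$ trivially. Collecting the bounds shows that strings in the target language are accepted with probability at least $1-\epsilon$, and strings outside it with probability at most $\epsilon$.

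I do not expect a serious obstacle: the construction decouples the classical recognition of $L_1$ from the quantum recognition of $L_2$, and the only delicate point is choosing the correct projector per classical state for each Boolean combination so that the error bound is preserved as $\epsilon$ rather than blown up to, e.g., $2\epsilon$. The resulting 1QFAC uses exactly $n_1$ classical states and $n_2$ quantum basis states, as required.
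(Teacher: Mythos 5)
Your proof is correct and takes essentially the same approach as the paper: a product construction in which the classical component runs the minimal DFA for $L_1$, the quantum component runs the MO-1QFA for $L_2$ independently of the classical state, and the per-state measurement $\{P_{s,a},P_{s,r}\}$ encodes the Boolean operation. As a side note, your choice $P_{s,a}=P_a=\sum_{p\in Q_{acc}}|p\rangle\langle p|$ for $s\notin F_1$ in the $L_2\setminus L_1$ case is the correct one; the paper's displayed projector $\sum_{p\in Q\setminus Q_{acc}}|p\rangle\langle p|$ there appears to be a typo, since it would accept $\overline{L_1\cup L_2}$ rather than $L_2\setminus L_1$.
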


\begin{proof} Let $A_1=(S,\Sigma,\delta,s_0,F)$ be a minimal DFA accepting $L_1$, and let $A_2=(Q, \Sigma, |\psi_{0}\rangle, \\ \{U(\sigma)\}_{\sigma\in\Sigma},Q_{acc})$ be an MO-1QFA accepting $L_2$, where $s_0\in S$ is the initial state, $\delta$ is the transition function, and $F\subseteq S$ is a finite subset denoting accepting states; the symbols in $A_2$ are the same as those in the definition of MO-1QFA as above.

Then by $A_1$ and $A_2$ we define a 1QFAC ${\cal A}=(S,Q,\Sigma,\Gamma, s_{0},|\psi_{0}\rangle, \delta,
\mathbb{U}, {\cal M})$  accepting $L_{1}\cap L_{2}$, where  $S,Q,\Sigma, s_{0},|\psi_{0}\rangle, \delta$ are the same as those in $A_1$ and $A_2$, $\Gamma=\{a,r\}$,  $\mathbb{U}=\{U_{s\sigma}=U(\sigma): s\in S, \sigma\in \Sigma\}$, and ${\cal M}=\{M_{s}:s\in S\}$ where $M_s=\{P_{s,a},P_{s,r} \}$ and
\[
P_{s,a}=\left\{\begin{array}{ll}
\sum_{p\in Q_{acc}}|p\rangle\langle p|, & s\in F;\\
O,& s\not\in F,
\end{array}
\right.
\] where $O$ denotes the zero operator, and
$P_{s,r}=I-P_{s,a}$ with $I$ being the identity operator.

According to the above definition of 1QFAC, we easily know that, for any string $x\in \Sigma^*$, if $x\in L_1$ then the accepting probability of
1QFAC ${\cal A}$ is equal to the accepting probability of MO-1QFA $A_2$; if $x\not\in L_1$ then the accepting probability of
1QFAC ${\cal A}$ is zero. So, 1QFAC ${\cal A}$ accepts the intersection $L_{1}\cap L_{2}$.

Similarly, we can construct the other three 1QFAC accepting the union $L_{1}\cup L_{2}$, differences $L_{1}\setminus L_{2}$, and $L_{2}\setminus L_{1}$, respectively. Indeed, we only need define different measurements in these 1QFAC. If we construct 1QFAC accepting $L_{1}\cup L_{2}$, then
\[P_{s,a}=\left\{\begin{array}{ll}
I,&  s\in F;\\
\sum_{p\in Q_{acc}}|p\rangle\langle p|, & s\not\in F.
\end{array}
\right.
\]
If we construct 1QFAC accepting $L_{1}\setminus L_{2}$, then
\[
P_{s,a}=\left\{\begin{array}{ll}
\sum_{p\in Q\setminus Q_{acc}}|p\rangle\langle p|, & s\in F;\\
O,& s\not\in F.
\end{array}
\right.
\]
If we construct 1QFAC accepting $L_{2}\setminus L_{1}$, then
\[
P_{s,a}=\left\{\begin{array}{ll}
\sum_{p\in Q\setminus Q_{acc}}|p\rangle\langle p|, & s\not\in F;\\
O,& s\in F.
\end{array}
\right.
\]

\end{proof}

Consider the regular language $$L^0(m)=\{w0: w\in \{0,1\}^*, |w0|=km,k=1,2,3,\cdots \}.$$  Clearly, the minimal classical DFA accepting $L^0(m)$ has  $m+1$ states, as depicted in Figure~\ref{fig:dfal(m)}.

\begin{figure}[htbp]%
$$\xymatrix  {
 *+++[o][F-]{q_0}\ar@<-1ex>[r]^{0,1}
&*+++[o][F-]{q_1}\ar@<-1ex>[r]^{0,1}
&*+++[o][F-]{q_2}\ar@<-1ex>[r]^{0,1}
&*\txt{ \ \ \ ... \ \ \ }\ar@<-1ex>[r]^{0,1}
&*++[o][F-]{q_{\textrm{\tiny $m\!\!-\!\!1$}}}\ar@/_2pc/@(ul,ur)[llll]_1\ar@<-1ex>[r]^{0}
&*+++[o][F=]{q_m}\ar@/^3pc/@(dr,dl)[llll]^{0,1}
}$$
\caption{DFA accepting $L^0(m)$.}%
\label{fig:dfal(m)}%
\end{figure}
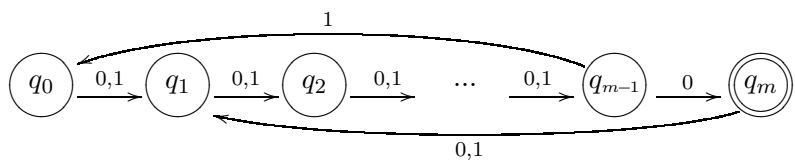

Indeed,  neither MO-1QFA nor MM-1QFA can accept
$L^0(m)$. We can easily verify this result by employing a lemma from \cite{BP02,GK02}. That is,

 \begin{Lm} [\cite{BP02,GK02}] \label{construction}\em  Let $L$ be a regular language, and let $M$ be its minimal DFA containing the construction in Figure 3, where states $p$ and $q$ are distinguishable (i.e., there exists a string  $z$ such that either $\delta(p,z)$ or  $\delta(q,z)$ is an accepting state). Then, $L$ can not be accepted by MM-1QFA.

 \end{Lm}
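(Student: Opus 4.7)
The plan is to proceed by contradiction: suppose some MM-1QFA ${\cal A}$ accepts $L$ with error bounded by $\epsilon<1/2$. The key tool is that reading a fixed word $y$ under the MM-1QFA dynamics applies the contraction $T_{y}=P_{n}U(y)$ to the non-halting component of the quantum state, while leaking some probability mass into the accepting and rejecting registers; iterating $y$ therefore gives a sequence $T_{y}^{k}$ of contractions on ${\cal H}(Q)$ whose asymptotic behaviour is controlled by spectral theory.

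First I would extract from the forbidden pattern of Figure~3 strings $x,y,z\in\Sigma^{*}$ and distinct DFA states $p\neq q$ with $\delta^{*}(s_{0},x)=p$ and such that iterating $y$ cycles through a set of states containing both $p$ and $q$; hence there exist arbitrarily large indices $n_{1}<n_{2}$ with $\delta^{*}(s_{0},xy^{n_{1}})=p$ and $\delta^{*}(s_{0},xy^{n_{2}})=q$. The distinguishability hypothesis additionally yields a suffix $z$ for which exactly one of $xy^{n_{1}}z$, $xy^{n_{2}}z$ belongs to $L$, so the accepting probabilities of ${\cal A}$ on these two strings must differ by at least $2\epsilon$.

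Second I would pass to the Jordan decomposition of $T_{y}$ on ${\cal H}(Q)$. As a contraction its spectrum lies in the closed unit disk; the sum of the eigenspaces with unit-modulus eigenvalues gives a subspace $E_{1}$ on which some power of $T_{y}$ acts unitarily, while the rest of the spectrum decays strictly. A compactness/pigeonhole argument on the closure of the unitary group generated by $T_{y}\!\restriction_{E_{1}}$ shows that, for every $\eta>0$, one can locate $n_{1}<n_{2}$ inside the DFA-residue classes forcing $p$ and $q$ respectively, such that the triple (non-halting quantum state, cumulative accepting mass, cumulative rejecting mass) produced after reading $xy^{n_{1}}$ and after reading $xy^{n_{2}}$ agrees up to norm $\eta$.

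Finally, since reading the suffix $z\$$ amounts to composing $U(\sigma)$'s with the projections $P_{a},P_{n},P_{r}$, each of which is norm-non-expansive, the discrepancy between the two evolutions propagates only linearly; appealing to Lemma~\ref{Lm-p} at the final measurement step bounds $|\Prob_{{\cal A},a}(xy^{n_{1}}z)-\Prob_{{\cal A},a}(xy^{n_{2}}z)|$ by a constant $C=C(|z|)$ times $\eta$. Choosing $\eta<2\epsilon/C$ contradicts the $2\epsilon$ separation derived above. The main obstacle will be the spectral step: $T_{y}$ is not unitary, so isolating the persistent subspace $E_{1}$ from the strictly-contracting part while simultaneously tracking the accumulated accepting and rejecting probabilities requires care, and this is the technical heart of the Kondacs--Watrous / Brodsky--Pippenger argument that underlies the cited result.
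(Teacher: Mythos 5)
First, a point of reference: the paper does not prove Lemma~\ref{construction} at all --- it is imported verbatim from \cite{BP02,GK02} and used as a black box to show that $L^0(m)$ is not accepted by any MM-1QFA. So your proposal cannot be measured against an in-paper argument; it has to stand on its own as a reconstruction of the Brodsky--Pippenger/Golovkins--Kravtsev proof. As such, you have correctly identified the skeleton of that proof: argue by contradiction from a bounded-error acceptor, work with the non-halting contraction $T_w=P_nU(w)$, split ${\cal H}(Q)$ into a part on which the dynamics is asymptotically norm-preserving and a part that decays, run a recurrence argument on the norm-preserving part, and finish with a perturbation bound in the spirit of Lemma~\ref{Lm-p}.

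As a proof, however, the proposal has two genuine gaps. The decisive one is the step you yourself flag as ``the technical heart'': the existence of an invariant splitting $E_1\oplus E_2$ with the three properties you actually need --- (i) $T$ acts isometrically on $E_1$, hence leaks no probability into the accept/reject subspaces there; (ii) the $E_2$-component can be driven below any $\eta>0$ by words from the relevant semigroup; (iii) suitable powers of $T$ restricted to $E_1$ return $\eta$-close to the identity while the cumulative leaked probability between the two time points stays below $\eta$ --- is precisely the content of the Ambainis--Freivalds/Brodsky--Pippenger lemma, and you assert it rather than prove it. ``Jordan decomposition plus compactness'' is the right slogan, but the simultaneous control of the recurrence time and of the accumulated halting probability is where all the work lies, and nothing in your sketch discharges it. The second gap is combinatorial: Figure~\ref{fig:construction} gives $\delta(p,x)=q$, $\delta(q,x)=q$ and $\delta(q,y)=p$, so the word carrying a self-loop --- the one whose non-halting operator must be iterated and subjected to the recurrence argument --- is $x$ at state $q$, not $y$; nothing is assumed about $\delta(p,y)$, so your claim that iterating $y$ cycles through a set containing both $p$ and $q$, and the ensuing choice of test words $xy^{n_1}$ and $xy^{n_2}$, do not match the pattern. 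The separating pair should instead be of the form $wu$ versus $wux^k$ (landing at $p$, respectively $q$, for a suitable $u$ with $\delta(p,u)=p$ chosen so that the transient component has already decayed), after which the distinguishing suffix $z$ is appended. Neither gap is fatal to the strategy, but both must be filled before this counts as a proof.
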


\begin{figure}[htbp]%
$$\xymatrix  {
 *+++[o][F-]{p}\ar@<-1ex>@/_/[rr]_{x}
&&*+++[o][F-]{q}\ar@<-1ex>@/_/[ll]_{y}\ar@(r,u)[]_{x}
}$$
\caption{Construction not accepted by an MM-1QFA.}%
\label{fig:construction}%
\end{figure}
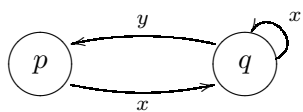

 \begin{Pp} \em Neither MO-1QFA nor MM-1QFA can accept
 $L^0(m)$.

 \end{Pp}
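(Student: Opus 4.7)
The plan is to invoke Lemma \ref{construction} directly on the minimal DFA depicted in Figure \ref{fig:dfal(m)}, by exhibiting an instance of the forbidden construction of Figure \ref{fig:construction} inside it. Since the languages accepted by MO-1QFA form a subclass of those accepted by MM-1QFA, ruling out MM-1QFA acceptance of $L^0(m)$ will simultaneously rule out MO-1QFA acceptance.

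First I would pick the candidate pair of states $p = q_0$ and $q = q_m$. These two states are distinguishable in a trivial way: $q_m$ is accepting whereas $q_0$ is not, so the empty string already witnesses $\delta^*(p,\epsilon) \notin F$ while $\delta^*(q,\epsilon) \in F$. Next I would look for a string $x$ playing the role of the double arrow in Figure \ref{fig:construction}, i.e.\ with $\delta^*(p,x) = q$ and $\delta^*(q,x) = q$. The natural choice is $x = 0^m$: starting in $q_0$ and reading $m$ zeros walks through $q_0 \to q_1 \to \cdots \to q_{m-1} \to q_m$, so $\delta^*(q_0, 0^m) = q_m$; starting instead in $q_m$, the first $0$ sends the DFA to $q_1$ and the remaining $m-1$ zeros take it to $q_m$ again, hence $\delta^*(q_m, 0^m) = q_m$ as well.

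It remains to produce a string $y$ with $\delta^*(q,y) = p$. The only transition entering $q_0$ in the DFA is the arrow $q_{m-1} \xrightarrow{1} q_0$, so $y$ must end with $1$ and must drive the machine from $q_m$ to $q_{m-1}$ beforehand. Taking $y = 0^{m-1}1$ works: from $q_m$ the first $0$ moves to $q_1$, the next $m-2$ zeros take the machine to $q_{m-1}$, and the final $1$ returns it to $q_0$. Thus we have exhibited the full configuration of Figure \ref{fig:construction} sitting inside the minimal DFA for $L^0(m)$, with $p = q_0$, $q = q_m$, $x = 0^m$, and $y = 0^{m-1}1$.

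With these ingredients in place, Lemma \ref{construction} gives immediately that $L^0(m)$ cannot be accepted by any MM-1QFA, and the analogous well-known result for MO-1QFA (or alternatively, the fact that MO-1QFA are subsumed by MM-1QFA in recognition power) yields the statement for MO-1QFA. I do not foresee a genuinely hard step: the whole argument is a matter of spotting the right pair of states and the right pair of strings, and verifying the transitions by inspection of Figure \ref{fig:dfal(m)}. The only mild subtlety is making sure that $p$ and $q$ are distinguishable in the sense of the lemma, which is immediate since one is accepting and the other is not.
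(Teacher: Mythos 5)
Your proposal is correct and follows exactly the paper's own argument: reduce to the MM-1QFA case via the inclusion of MO-1QFA, then apply Lemma~\ref{construction} to the minimal DFA of $L^0(m)$ with $p=q_0$, $q=q_m$, $x=0^m$, $y=0^{m-1}1$, $z=\epsilon$. You merely spell out the transition-by-transition verification that the paper leaves to inspection of Figure~\ref{fig:dfal(m)}.
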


 \begin{proof}
 It suffices to show that no MM-1QFA can accept $L^0(m)$ since the  languages accepted by MO-1QFA are also accepted by MM-1QFA \cite{AF98,BP02,BMP03}.  By  Lemma \ref{construction},  we know that $L^0(m)$ can not be accepted by any MM-1QFA  since its minimal DFA (see Figure 2) contains such a construction: For example, we can take $p=q_0, q=q_m, x=0^m, y=0^{m-1}1, z=\epsilon$.
\end{proof}

In the following we recall a relevant result.

\begin{Pp}[\cite{AF98}] \label{Lp}\em Let the language $L_{p}=\{a^i: i \hskip 2mm  \textrm{is divisible by} \hskip 2mm p\}$ where $p$ is a prime number. Then for any $\varepsilon>0$, there exists an MM-1QFA with $O(\log(p))$ states such that for any $x\in L_{p}$, $x$ is accepted with no error, and the  probability for accepting $x\not\in  L_{p}$ is smaller than  $\varepsilon$.
 \end{Pp}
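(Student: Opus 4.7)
My plan is to construct an explicit MM-1QFA whose state count is $O(\log p)$ and which relies on a direct sum of two-dimensional rotation gadgets. Fix $t=\lceil c\log p/\varepsilon\rceil$ for a sufficiently large constant $c$ and set $Q=\{q_0^{(j)},q_1^{(j)}:1\le j\le t\}\cup\{q_a,q_r\}$, with $Q_{\mathrm{acc}}=\{q_a\}$ and $Q_{\mathrm{rej}}=\{q_r\}$, so that $|Q|=2t+2=O(\log p)$. The initial quantum state is $|\psi_0\rangle=\frac{1}{\sqrt{t}}\sum_{j=1}^t|q_0^{(j)}\rangle$; the unitary $U(a)$ fixes $|q_a\rangle$ and $|q_r\rangle$ and, inside each two-dimensional subspace $\mathrm{span}(|q_0^{(j)}\rangle,|q_1^{(j)}\rangle)$, acts as the rotation of angle $\theta_j=2\pi k_j/p$ for integers $k_j\in\{1,\ldots,p-1\}$ yet to be chosen; finally, $U(\$)$ is any unitary satisfying $U(\$)|\psi_0\rangle=|q_a\rangle$, obtained by completing this single-vector map to a unitary on $\mathcal{H}(Q)$.

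The key feature of this layout is that $U(a)$ preserves the subspace orthogonal to $\{|q_a\rangle,|q_r\rangle\}$, and the initial state lies in that subspace. Hence every intermediate MM-1QFA measurement while reading $a^n$ returns the non-halting outcome with probability $1$ and leaves the state intact, so the state right before reading the end marker equals
\begin{equation*}
|\psi_n\rangle=\tfrac{1}{\sqrt{t}}\sum_{j=1}^t\bigl(\cos(n\theta_j)|q_0^{(j)}\rangle+\sin(n\theta_j)|q_1^{(j)}\rangle\bigr).
\end{equation*}
Applying $U(\$)$ and projecting onto $|q_a\rangle$, and using $U(\$)^{\dagger}|q_a\rangle=|\psi_0\rangle$, yields
\begin{equation*}
p_{\mathrm{acc}}(a^n)=|\langle q_a|U(\$)|\psi_n\rangle|^2=|\langle\psi_0|\psi_n\rangle|^2=\Bigl|\tfrac{1}{t}\sum_{j=1}^t\cos(2\pi k_j n/p)\Bigr|^2.
\end{equation*}
When $p\mid n$ every cosine equals $1$ and $p_{\mathrm{acc}}(a^n)=1$ regardless of the $k_j$'s, so strings in $L_p$ are accepted with certainty.

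The remaining and main step is to fix $k_1,\ldots,k_t$ so that $p_{\mathrm{acc}}(a^n)<\varepsilon$ whenever $n\not\equiv 0\pmod p$. I would draw the $k_j$'s independently and uniformly from $\{1,\ldots,p-1\}$ and argue by the probabilistic method. Since $p$ is prime, for each fixed nonzero residue $n\bmod p$ the map $k\mapsto kn\bmod p$ permutes $\{1,\ldots,p-1\}$, so $\cos(2\pi k_j n/p)$ has the same distribution as $\cos(2\pi m/p)$ with $m$ uniform in $\{1,\ldots,p-1\}$, and the identity $\sum_{m=0}^{p-1}\cos(2\pi m/p)=0$ gives expectation $-1/(p-1)$. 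Hoeffding's inequality applied to the independent bounded variables $\cos(2\pi k_j n/p)\in[-1,1]$ then yields $\Pr\bigl[|\tfrac{1}{t}\sum_j\cos(2\pi k_j n/p)|\ge\sqrt{\varepsilon}\bigr]\le 2e^{-c't\varepsilon}$ for some constant $c'>0$, once the drift $1/(p-1)$ has been absorbed into the deviation (the finitely many small primes where this absorption fails can be handled by inflating $t$ by a constant). Setting $t=\Theta(\log p/\varepsilon)$ drives this bound below $1/p$; a union bound over the $p-1$ nonzero residues of $n\bmod p$ (enough because $p_{\mathrm{acc}}(a^n)$ depends only on $n\bmod p$) then exhibits a deterministic choice of $k_1,\ldots,k_t$ for which $p_{\mathrm{acc}}(a^n)<\varepsilon$ simultaneously for all $a^n\notin L_p$, fixing the automaton and completing the construction. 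The main obstacle is exactly this concentration-plus-union-bound argument, where one must calibrate $t$ against both the subconstant bias $1/(p-1)$ and the required deviation threshold $\sqrt{\varepsilon}$ while keeping the final state count at $O(\log p)$.
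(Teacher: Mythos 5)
Your construction is correct and is essentially the standard Ambainis--Freivalds argument: a direct sum of $O(\log p)$ two-dimensional rotation blocks with angles $2\pi k_j/p$, acceptance amplitude $\langle\psi_0|\psi_n\rangle=\frac{1}{t}\sum_j\cos(2\pi k_jn/p)$, and a Hoeffding-plus-union-bound selection of the $k_j$'s over the $p-1$ nonzero residues. The paper itself states this proposition without proof, citing \cite{AF98}, and your argument matches that original proof (the only loose end, the small primes where the bias $1/(p-1)$ is not negligible against $\sqrt{\varepsilon}$, is genuinely trivial since such $p$ are bounded by a constant depending on $\varepsilon$).
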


Indeed, from the proof of Proposition \ref{Lp} by \cite{AF98}, also as Ambainis and Freivalds pointed out in \cite{AF98} (before Section 2.2 in \cite{AF98}), Proposition \ref{Lp} holds for MO-1QFA as well.

Clearly, by the same technique used in the proof of Proposition \ref{Lp} \cite{AF98},  one can obtain that, by replacing $L_{p}$ with $L(m)=\{w: w\in \{0,1\}^*, |w|=km,k=1,2,3,\cdots \}$ with $m$ being a prime number, Proposition \ref{Lp} still holds (by viewing all input symbols in $\{0,1\}$ as $a$).
By combining Proposition \ref{Lp} with Lemma \ref{operation}, we have the following corollary.

\begin{Co}\em  Suppose that $m$ is a prime number. Then for any $\varepsilon>0$, there exists a 1QFAC with 2 classical states and $O(\log(m))$ quantum basis states such that for any $x\in L^0(m)$, $x$ is accepted with no error, and the probability for accepting $x\not\in  L^0(m)$ is smaller than  $\varepsilon$.
 \end{Co}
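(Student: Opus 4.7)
The plan is to decompose $L^0(m)$ as the intersection of two simpler languages, then invoke the machinery already established in this subsection (Proposition~\ref{Lp} and Lemma~\ref{operation}).

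First, I would write $L^0(m) = L_1 \cap L_2$, where $L_1 = \{w \in \{0,1\}^* : w \text{ ends with the symbol } 0\}$ and $L_2 = L(m) = \{w \in \{0,1\}^* : |w| = km \text{ for some } k \geq 1\}$. The language $L_1$ is evidently accepted by a minimal DFA with exactly two states: an initial/``last symbol was $1$'' state and an accepting ``last symbol was $0$'' state. For $L_2$, I would invoke the remark immediately following Proposition~\ref{Lp}, namely that the same Ambainis--Freivalds construction carries over when the single-letter alphabet is replaced by $\{0,1\}$ (treating every input symbol uniformly), yielding, for any $\varepsilon > 0$, an MO-1QFA with $O(\log m)$ quantum basis states that accepts every $x \in L_2$ with probability $1$ and accepts every $x \notin L_2$ with probability strictly less than $\varepsilon$.

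Next I would apply Lemma~\ref{operation} to the pair $(L_1, L_2)$ to obtain a 1QFAC with $n_1 = 2$ classical states and $n_2 = O(\log m)$ quantum basis states accepting $L^0(m) = L_1 \cap L_2$. The point that must be checked (rather than merely quoted) is that the \emph{one-sided} error guarantee of Proposition~\ref{Lp} survives the construction of Lemma~\ref{operation}: inspection of that construction shows that on input $x$, the 1QFAC accepts with probability $0$ whenever $x \notin L_1$, and with probability exactly equal to the MO-1QFA's accepting probability on $x$ whenever $x \in L_1$. Consequently, if $x \in L^0(m)$, then $x \in L_1 \cap L_2$ and the acceptance probability equals $1$, while if $x \notin L^0(m)$ but $x \in L_1$, the acceptance probability equals the MO-1QFA probability on $x \notin L_2$, which is less than $\varepsilon$; and if $x \notin L_1$, the acceptance probability is $0 < \varepsilon$.

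The construction is essentially routine given the earlier results; the only delicate point is verifying that the classical ``last symbol is $0$'' tracking contributes no error and that one-sided error is preserved by intersection. Since the classical component of a 1QFAC is deterministic and the projection chosen in the proof of Lemma~\ref{operation} at a non-$F$ classical state is the zero operator, this is immediate. Combining the state counts $n_1 = 2$ and $n_2 = O(\log m)$ finishes the proof.
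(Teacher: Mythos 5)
Your proposal is correct and follows essentially the same route as the paper: decompose $L^0(m)$ as the intersection of the two-state-DFA language $\{0,1\}^*0$ with $L(m)$, invoke the binary-alphabet version of Proposition~\ref{Lp} for $L(m)$, and conclude via Lemma~\ref{operation}. Your additional check that the one-sided error guarantee survives the intersection construction (acceptance probability $0$ outside $F$, unchanged MO-1QFA probability inside $F$) is a detail the paper leaves implicit, and it is verified correctly.
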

\begin{proof} Note  that  we have $$L^0(m)=L^0\cap L(m)$$ where  $L^0=\{w0: w\in \{0,1\}^*\}$ is accepted by a DFA (depicted in Figure \ref {fig:dfaw0}) with only two states and $L(m)$ can be accepted by an MO-1QFA with $O(\log(m))$  quantum basis states as shown in Proposition \ref{Lp}. Therefore, the result follows from Lemma \ref{operation}.
\end{proof}
\begin{figure}[htbp]%
$$\xymatrix  {
 *+++[o][F-]{q_0}\ar@<-1ex>@/_/[rr]_{0}\ar@(l,u)[]^{1}
&&*+++[o][F=]{q_1}\ar@<-1ex>@/_/[ll]_{1}\ar@(r,u)[]_{0}
}$$
\caption{DFA accepting $\{0,1\}^*0$.}%
\label{fig:dfaw0}%
\end{figure}
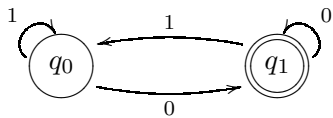

In summary, we have the following result.

 \begin{thm}\label{prop:accbe}\em  For any prime number $m\geq 2$, there exists a regular language $L^0(m)$ satisfying: (1) neither MO-1QFA nor MM-1QFA can accept $L^0(m)$; (2) the number of states in the minimal DFA accepting $L^0(m)$ is $m+1$; (3) for any  $\varepsilon>0$, there exists a 1QFAC with 2 classical states and $O(\log(m))$ quantum basis states such that for any $x\in L^0(m)$, $x$ is accepted with no error, and the probability for accepting $x\not\in  L^0(m)$ is smaller than  $\varepsilon$.

 \end{thm}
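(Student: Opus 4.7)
The plan is to assemble the three claims from results already proved in this section, so the proof reduces to bookkeeping; the only mildly non-routine step will be the Myhill--Nerode minimality check for claim~(2), and even that can be read off directly from the picture.

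For claim~(1) I would reuse the argument of the unlabelled Proposition earlier in this subsection: the minimal DFA of $L^0(m)$ in Figure~\ref{fig:dfal(m)} contains the two-state construction of Lemma~\ref{construction}, with the concrete choice $p=q_0$, $q=q_m$, $x=0^m$, $y=0^{m-1}1$, $z=\epsilon$. This rules out MM-1QFA. Since every language accepted by some MO-1QFA is also accepted by some MM-1QFA, neither model accepts $L^0(m)$.

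For claim~(2) I would verify by Myhill--Nerode that Figure~\ref{fig:dfal(m)} is already the minimal DFA. Only $q_m$ is accepting, so the empty suffix separates it from every other state; and for $0\le i<j\le m-1$ the suffix $0^{m-j}$ drives $q_j$ to the accepting state $q_m$, while driving $q_i$ to $q_{i+m-j}$ with $i+m-j<m$, which is non-accepting. This exhibits exactly $m+1$ equivalence classes, matching the state count of the pictured automaton.

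Claim~(3) is already packaged in the Corollary immediately preceding the theorem: writing $L^0(m)=L^0\cap L(m)$, with $L^0=\{w0:w\in\{0,1\}^*\}$ recognised by the 2-state DFA of Figure~\ref{fig:dfaw0} and $L(m)$ recognised by an MO-1QFA with $O(\log m)$ quantum basis states and one-sided error at most $\varepsilon$ (the $L_p$-style construction of Proposition~\ref{Lp} transplanted to binary inputs), Lemma~\ref{operation} fuses the two machines into a 1QFAC with $2$ classical states and $O(\log m)$ quantum basis states. The one-sided error structure survives the intersection because the classical DFA component rejects deterministically every input outside~$L^0$, so only the MO-1QFA's one-sided error on $L(m)$ can leak into the combined machine.
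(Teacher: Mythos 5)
Your proposal is correct and follows essentially the same route as the paper, which states this theorem as a summary of the immediately preceding results: claim (1) from the Proposition using Lemma~\ref{construction} with the same witnesses, claim (2) from the minimality of the DFA in Figure~\ref{fig:dfal(m)}, and claim (3) from the Corollary obtained via $L^0(m)=L^0\cap L(m)$ and Lemma~\ref{operation}. Your explicit Myhill--Nerode verification of claim (2) supplies a detail the paper only asserts as ``clearly,'' and it checks out.
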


 From the above result (see (2) and (3)) it follows that the lower bound given in Theorem \ref{bound} is tight, that is, attainable.

One should ask at this point whether similar results can be established for multi-letter 1QFA as proposed by Belovs et al. \cite{BRS07}.

Recall that $1$-letter 1QFA is exactly an MO-1QFA. Any  given
$k$-letter QFA can be simulated by some $k+1$-letter QFA. However, Qiu and Yu \cite{QY09}
proved that the contrary does not hold.  Belovs et al.
\cite{BRS07} have already showed that $(a+b)^{*}b$ can be accepted
by a 2-letter QFA but, as proved in \cite{KW97}, it cannot be
accepted by any MM-1QFA. On the other hand,  $a^*b^*$ can be accepted by MM-1QFA \cite{AF98} but it can not be accepted by any multi-letter 1QFA \cite{QY09}, and furthermore, there exists a regular language that can not be accepted by any MM-1QFA or  multi-letter 1QFA \cite{QY09}.

Let $\Sigma$ be an alphabet. For string $z=z_1\cdots z_n\in\Sigma^*$, consider the regular language  $$L_z=\Sigma^*z_1\Sigma^* z_2\Sigma^*\cdots \Sigma^* z_n \Sigma^*.$$ $L_z$  belongs to piecewise testable set that was introduced by Simon \cite{Sim75} and studied in \cite{Per94}. Brodsky and Pippenger \cite{BP02} proved that $L_z$ can be accepted by an MM-1QFA with $2n+3$ states.

Consider the following regular language $L(m)=\{w: w\in \Sigma^*, |w|=km,k=1,2,\cdots\}$. Then the minimal DFA accepting $L_z$ needs $n+1$ states, and the minimal DFA accepting the intersection $L_z(m)$ of $L_z$ and $L(m)$  needs  $m
(n+1)$ states. We will prove that no multi-letter 1QFA can accept $L_z(m)$. Indeed,
the minimal DFA accepting $L_z(m)$ can be described by $A=(Q, \Sigma, \delta, q_0, F)$ where
$Q=\{S_{ij}: i=0, 1, \dots, n; j=1, 2, \dots, m\}$, $\Sigma=\{z_1, z_2, \dots, z_n\}$, $q_0=S_{01}$,
$F=\{S_{n1}\}$, and the transition function $\delta$ is defined as:
\begin{equation}\label{minimalDFA}
    \delta(S_{ij},\sigma)=\left\{
                            \begin{array}{ll}
                              S_{n,(j{\hskip -2mm} \mod m)+1}, & \textrm{if } i=n, \\
                              S_{i+1,(j{\hskip -2mm} \mod m)+1}, & \textrm{if } i\neq n \textrm{ and } \sigma=z_{i+1}, \\
                              S_{i,(j{\hskip -2mm} \mod m)+1}, &\textrm{if } i\neq n \textrm{ and } \sigma\neq z_{i+1}.
                            \end{array}
                          \right.
\end{equation}

The number of states of the minimal DFA accepting $L_z(m)$ is $m
(n+1)$.

For the sake of simplicity, we consider a special case: $m=2$, $n=1$, and $\Sigma=\{0,1\}$. Indeed, this case can also show the above problem as desired. So, we consider the following language:
\[
L_0(2)=\{w: w\in \{0,1\}^*0\{0,1\}^*, |w|=2k,k=1,2,\cdots\}.
\]

The minimal DFA accepting $L_0(2)$ above needs 4 states and its transition figure is depicted by Figure~\ref{fig:dfa0even} as follows.

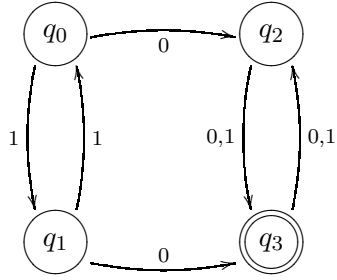
\begin{figure}[htbp]%
$$\xymatrix  {
 *+++[o][F-]{q_0}\ar@<-1ex>@/^/[rr]_{0}\ar@<-1ex>@/_/[dd]_{1}
&&*+++[o][F-]{q_2}\ar@<-1ex>@/_/[dd]_{0,1}\\\\
 *+++[o][F-]{q_1}\ar@<-1ex>@/_/[rr]^{0}\ar@<-1ex>@/_/[uu]_{1}
&&*+++[o][F=]{q_3}\ar@<-1ex>@/_/[uu]_{0,1}
}$$
\caption{DFA accepting $w\in \{0,1\}^*0\{0,1\}^*$ with $|w|$ even.}%
\label{fig:dfa0even}%
\end{figure}

We recall the definition of  F-construction and a proposition from \cite{BRS07}.

\begin{Df} [\cite{BRS07}]\em
A DFA  with state transition function $\delta$ is said to {\em contain
an F-construction} (see Figure~\ref{fig:fconstruction}) if there are non-empty words $t,z\in
\Sigma^{+}$ and two distinct states $q_{1},q_{2}\in Q$ such that
$\delta^{*}(q_{1},z)=\delta^{*}(q_{2},z)=q_{2}$,
$\delta^{*}(q_{1},t)=q_{1}$, $\delta^{*}(q_{2},t)=q_{2}$, where $\Sigma^+=\Sigma^*\backslash \{\epsilon\}$, $\epsilon$ denotes empty string.

\end{Df}

We can depict F-construction by Figure~\ref{fig:fconstruction}.

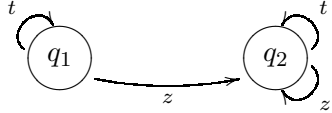
\begin{figure}[htbp]%
$$\xymatrix  {
 *+++[o][F-]{q_1}\ar@<-1ex>@/_/[rr]_{z}\ar@(l,u)[]^{t}
&&*+++[o][F-]{q_2}\ar@(r,u)[]_{t}\ar@(r,d)[]^{z}
}$$
\caption{F-Construction.}%
\label{fig:fconstruction}%
\end{figure}

\begin{Lm} [\cite{BRS07}]
A language $L$ can be accepted by a multi-letter 1QFA with bounded
error if and only if the minimal DFA of $L$ does not contain  any
F-construction.
\end{Lm}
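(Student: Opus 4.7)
The plan is to prove each direction separately. For the forward (necessity) direction, assume toward a contradiction that $L$ is accepted by a $k$-letter 1QFA $\mathcal{A}$ with bounded error $\varepsilon$, yet the minimal DFA of $L$ contains an F-construction with distinct states $q_1, q_2$ and non-empty words $t, z$. Since $q_1 \neq q_2$ in the minimal DFA, Myhill--Nerode (Lemma~\ref{MN-Th}) yields a distinguishing suffix $w$: exactly one of $q_1\cdot w$ and $q_2\cdot w$ is accepting. Pick any $u$ with $\delta^{*}(s_0,u)=q_1$; then $u t^N w$ ends at $q_1\cdot w$ and $u z t^N w$ ends at $q_2\cdot w$ for every $N\geq 0$, so these two strings must be separated in acceptance probability by at least $1-2\varepsilon$.

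The leverage comes from the $k$-letter window: the unitary applied at position $j$ depends only on $\sigma_{j-k+1}\cdots\sigma_j$. Thus, once $N|t|\geq k$, all but the first $k$ transitions inside the $t^N$-block coincide between the two strings; only an $O(k)$-long ``interface'' after $u$ versus after $uz$ differs. Hence the quantum state after reading $ut^N$ has the form $U_t^{N-O(1)} V_u |\psi_0\rangle$, and similarly the state after $uzt^N$ has the form $U_t^{N-O(1)} V_{uz}|\psi_0\rangle$, for a common unitary $U_t$. By compactness of the unitary group (or equivalently Weyl equidistribution applied to the eigenvalues of $U_t$), one finds arbitrarily large $N$ with $U_t^{N}$ arbitrarily close to the identity. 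Along such $N$, the two quantum states converge to fixed vectors, so after appending $w$ the acceptance probabilities differ by at most $o(1)$, contradicting the bounded-error gap.

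For the converse (sufficiency) direction, I would show constructively that if the minimal DFA $M$ of $L$ has no F-construction, then $L$ is accepted by a $k$-letter 1QFA. The absence of F-construction forces a strong reversibility property on $M$: on each strongly connected component, every word acts as a permutation of the states (otherwise one could exhibit two states collapsing under some word while being fixed by another, rebuilding an F-construction). This reduces $M$ to a bounded transient prefix followed by group-actions on its recurrent components. Choosing $k$ larger than the diameter of the transient part and the size of the largest recurrent cycle, one encodes the transitions indexed by length-$k$ suffixes as permutation matrices (which are unitary), takes as quantum basis the states of $M$, sets $|\psi_0\rangle=|s_0\rangle$, and sets the projective measurement to project onto the subspace spanned by the accepting states of $M$; this yields a $k$-letter 1QFA accepting $L$ with zero error.

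The main obstacle is the converse: converting the purely combinatorial ``no F-construction'' hypothesis into the algebraic fact that each word acts as a permutation on each strongly connected component, and then choosing a single $k$ uniformly over all words so that both the transient behaviour and the recurrent permutation structure are faithfully represented in a single $k$-letter 1QFA. The necessity direction, by contrast, is a relatively standard compactness-plus-Myhill--Nerode argument once the role of the $k$-letter window in decoupling the interface from the bulk of the $t^N$-block is understood.
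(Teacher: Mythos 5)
The paper does not prove this lemma at all --- it is quoted from \cite{BRS07} --- so there is no in-paper argument to compare against; judged on its own merits, your proposal has a genuine gap in each direction.

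\textbf{Necessity.} Your word pair $ut^{N}w$ versus $uzt^{N}w$ cannot produce a contradiction. As you yourself compute, for large $N$ the pre-$w$ states are $U_t^{N-O(1)}V_u|\psi_0\rangle$ and $U_t^{N-O(1)}V_{uz}|\psi_0\rangle$ with a \emph{common} bulk unitary $U_t$; since $U_t$ is an isometry, the distance between the two states equals $\|V_u|\psi_0\rangle-V_{uz}|\psi_0\rangle\|$ for every such $N$ --- it is constant, not $o(1)$. Choosing $N$ with $U_t^{N}\approx I$ only makes each state converge to its own fixed limit; the two limiting acceptance probabilities are two fixed numbers, and nothing forbids them from differing by the required gap $1-2\varepsilon$. (A telltale sign is that your argument never uses the clause $\delta^{*}(q_2,z)=q_2$ of the F-construction.) The working argument iterates a loop instead: fix $N_0$ with $N_0|t|\geq k$ and set $v=zt^{N_0}$. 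Then $\delta^{*}(q_1,v)=\delta^{*}(q_2,v)=q_2$, and because every prefix ending in $t^{N_0}$ presents the same length-$(k-1)$ context to the sliding window, reading $v$ after such a prefix always applies one and the same unitary $V$. Hence the state after $ut^{N_0}v^{m}$ is $V^{m}|\psi_{ut^{N_0}}\rangle$, the underlying DFA state is $q_1$ for $m=0$ and $q_2$ for $m\geq1$, and compactness yields $m\geq 1$ with $V^{m}\approx I$: two words with arbitrarily close quantum states (and identical window contexts for the distinguishing suffix $w$, so the closeness survives reading $w$) whose acceptance probabilities must nevertheless differ by $1-2\varepsilon$. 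That is the contradiction.

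\textbf{Sufficiency.} The structural claim you build on --- that absence of an F-construction forces every word to act as a permutation of each strongly connected component --- is false. The language $(a+b)^{*}b$, cited in this very paper as accepted by a $2$-letter 1QFA, has a two-state minimal DFA forming a single strongly connected component on which both letters act as \emph{constant} maps; it contains no F-construction precisely because no non-empty word fixes both states. So your reduction to ``group actions on recurrent components'' collapses at the first step. The invariant that the absence of an F-construction actually delivers is injectivity of the transition map restricted to the set of states reachable by words sharing a common length-$(k-1)$ suffix, for $k$ large enough; it is these partial injections (indexed by windows, not by arbitrary words acting on SCCs) that one completes to permutations and hence to unitaries of a $k$-letter 1QFA.
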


In Figure~\ref{fig:dfa0even}, there are  an F-construction: For example,  we  consider $q_0$ and $q_3$, and  strings $00$ and $11$, from the above proposition which shows that no multi-letter 1QFA  can accept $L_0(2)$.

Therefore, similarly to Theorem \ref{prop:accbe}, we have:

\begin{thm}\em
 If  we have to restrict $m$ to be a prime number, then for any string $z$ with $|z|=n\geq 1$ there exists a regular language $L_z(m)$ that can not be accepted by any multi-letter 1QFA, but for every $\varepsilon$ there exists a 1QFAC ${\cal A}_m$  with $n+1$ classical states (independent of $m$) and $O(\log(m))$ quantum basis states such that if $x\in L_z(m)$, $x$ is accepted with no error, and the probability for accepting $x\not\in  L_z(m)$ is smaller than  $\varepsilon$. In contrast, the minimal DFA accepting $L_z(m)$ has $m(n+1)$ states.
\end{thm}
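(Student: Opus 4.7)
My plan is to mirror the structure of Theorem~\ref{prop:accbe} but in the multi-letter setting, using the language $L_z(m)=L_z\cap L(m)$ introduced just before the statement. The three clauses of the theorem (1QFAC upper bound, minimal DFA size, impossibility for multi-letter 1QFA) will be established in turn.

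First I would handle the 1QFAC upper bound. By the standard ``progress through $z$'' construction, $L_z$ is accepted by a DFA with exactly $n+1$ states (one state per length-$i$ prefix of $z$ already matched as a subsequence, with the $n$-th state absorbing). For $L(m)$, since $m$ is prime, the argument of Ambainis and Freivalds recalled in Proposition~\ref{Lp} applies verbatim after identifying all input symbols with a single one, giving an MO-1QFA with $O(\log m)$ quantum basis states that accepts $L(m)$ with no error on positive instances and error at most $\varepsilon$ on negative instances. Applying Lemma~\ref{operation} to the intersection then yields a 1QFAC with $n+1$ classical states and $O(\log m)$ quantum basis states with the claimed one-sided error guarantee, and the error bound is preserved by the construction in that lemma (the classical control only masks the quantum accepting projector on $s\notin F$).

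Next I would verify that the minimal DFA of $L_z(m)$ has $m(n+1)$ states. The DFA $A$ given explicitly in (\ref{minimalDFA}) with state set $\{S_{ij}: 0\le i\le n,\ 1\le j\le m\}$ accepts $L_z(m)$, so it suffices to show via Myhill--Nerode (Lemma~\ref{MN-Th}) that its $m(n+1)$ states are pairwise inequivalent. For any two states $S_{ij}$ and $S_{i'j'}$ with $(i,j)\ne(i',j')$, I would exhibit a distinguishing suffix: if $i\ne i'$, a suffix containing exactly the missing letters $z_{i+1}\cdots z_n$ padded to appropriate length distinguishes them; if $i=i'$ but $j\ne j'$, a suffix of suitable length (using that $m$ is prime and hence the length-mod-$m$ cycle is tight) accepts from one state but not the other.

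The main obstacle will be clause (1), ruling out any multi-letter 1QFA. By the Belovs--Rosmanis--Smotrovs lemma recalled just above, it suffices to exhibit an F-construction inside the minimal DFA, i.e.\ two distinct states $q_1\ne q_2$ and nonempty words $t,z'$ with $\delta^*(q_1,z')=\delta^*(q_2,z')=q_2$ and $\delta^*(q_i,t)=q_i$ for $i=1,2$. Assuming $|\Sigma|\ge 2$ (the degenerate unary case can be handled separately by a direct periodicity argument), I would fix any $\sigma\in\Sigma\setminus\{z_1\}$ and set $q_1=S_{01}$, $q_2=S_{n1}$, $t=\sigma^m$, and $z'=z_1z_2\cdots z_n\,\sigma^{m-(n\bmod m)}$ (dropping the padding if $m\mid n$). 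Reading $t$ from $S_{01}$ never triggers the ``advance $i$'' branch of (\ref{minimalDFA}) because $\sigma\ne z_1$, and $|t|=m$ resets the $j$-coordinate, so $\delta^*(S_{01},t)=S_{01}$; once $i=n$ the transition always stays in row $n$, so $\delta^*(S_{n1},t)=S_{n1}$ as well. The word $z'$ has length divisible by $m$ and walks deterministically from $S_{01}$ through $S_{12},S_{23},\dots$ into row $n$, landing at $S_{n1}$, while from $S_{n1}$ it simply cycles in row $n$ back to $S_{n1}$. This is the desired F-construction, and combining with the other two clauses completes the proof.
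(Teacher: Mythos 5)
Your proposal is correct and follows essentially the same route as the paper: the 1QFAC upper bound via Lemma~\ref{operation} applied to $L_z(m)=L_z\cap L(m)$ with the Ambainis--Freivalds MO-1QFA for $L(m)$, and the impossibility for multi-letter 1QFA via an F-construction in the minimal DFA together with the Belovs--Rosmanis--Smotrovs criterion. In fact you go further than the paper, which only exhibits the F-construction for the special case $L_0(2)$ and says ``similarly''; your general choice $q_1=S_{01}$, $q_2=S_{n1}$, $t=\sigma^m$, $z'=z_1\cdots z_n\sigma^{m-(n\bmod m)}$ with $\sigma\neq z_1$ checks out against the transition function~(\ref{minimalDFA}), with the one caveat that the argument (and indeed the statement itself) genuinely requires $|\Sigma|\geq 2$, since for a unary alphabet the minimal DFA contains no F-construction and clause~(1) would fail rather than need a ``separate periodicity argument''.
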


\section{Quantum Turing machines with classical control}\label{sec3}

Quantum Turing machines were proposed originally by Deutsch \cite{Deu85}. One of the main problems with Deutsch proposal is that it is hard to adapt and extend classical computability results using his notion of quantum machine, namely because states of the  Turing machine are quantum superpositions of classical states. 

To address this problem, \cite{per:jor:06} proposed a notion of quantum Turing machine where termination is similar to a probabilistic Turing machine. However, it is also not easy to derive computability results when the function computed by a Turing machine is a random variable. To address this issue, \cite{pmat:acs:asouto:14} proposed a notion of quantum Turing machine with deterministic control, which we revise here.

A {\it deterministic-control quantum Turing machine} (in short, {\it dcq Turing machine}) is 
a variant of a binary Turing machine with two tapes, one classical and the other with quantum contents, 
which are infinite in both directions.
Depending only on the state of the classical finite control automaton and the symbol being read by the classical head, the quantum head acts upon the quantum tape, 
a symbol can be written by the classical head, 
both heads can be moved independently of each other
and the state of the control automaton can be changed.

A computation ends if and when the control automaton reaches the halting state ($\qh$).
Notice that the contents of the quantum tape do not affect the computation flow, hence the deterministic control and, so, the deterministic halting criterion. In particular, the contents of the quantum tape do not influence at all if and when the computation ends.

The quantum head can act upon one or two consecutive qubits in the quantum tape. 
In the former case, it can apply any of the following operators to the qubit under the head: 
identity ($\qid$), Hadamard ($\had$), phase ($\phase$) and $\pi$ over 8 ($\qpi$).
In the latter case, the head acts on the qubit under it and the one immediately to the right 
by applying swap ($\swap$) or control-not ($\cnot$) 
with the control qubit being the qubit under the head.

Initially, 
the control automaton is in the starting state ($\qs$), 
the classical tape is filled with blanks (that is, with $\blk$'s) outside the finite input sequence $x$ of bits, 
the classical head is positioned over the rightmost blank before the input bits, 
the quantum tape contains three independent sequences of qubits 
-- an infinite sequence of $\ket0$'s followed by the finite input sequence $\ket\psi$ of possibly entangled qubits followed by an infinite sequence of $\ket0$'s,
and the quantum head is positioned over the rightmost $\ket0$ before the input qubits.
In this situation, we say that the machine starts with input
$(x,\ket\psi)$.

The control automaton is defined by the partial function
$$\delta: Q \times \qA \pto \qU \times \qD \times \qA \times \qD \times Q$$
where:
$Q$ is the finite set of control states containing at least the two distinct states $\qs$ and $\qh$ mentioned above;
$\qA$ is the alphabet composed of 0, 1 and $\blk$;
$\qU$ is the set $\{\qid,\had,\phase,\qpi,\swap,\cnot\}$ of primitive unitary operators
	that can be applied to the quantum tape;
and
$\qD$ is the set $\{\xL,\xN,\xR\}$ of possible head displacements --
one position to the left, none, and one position to the right.

For the sake of a simple halting criterion, we assume that 
$(\qh,a)\not\in\dom\delta$ for every $a\in\qA$ 
and
$(q,a)\in\dom\delta$ for every $a\in\qA$ and $q\neq\qh$.
Thus, as envisaged, the computation carried out by the machine does not terminate 
if and only if 
the halting state $\qh$ is not reached.

The machine evolves according to $\delta$ as expected:
$$\delta(q,a) = (U,d,a',d',q')$$
imposes that
if the machine is at state $q$ and reads $a$ on the classical tape,
then the machine
applies the unitary operator $U$ to the quantum tape,
displaces the quantum head according to $d$,
writes symbol $a'$ on the classical tape,
displaces the classical head according to $d'$,
and changes its control state to $q'$.

In short, by a dcq Turing machine we understand a pair $(Q,\delta)$ where $Q$ and $\delta$ are as above.


Concerning computations, the following terminology is useful.
The machine is said {\it to start from} $(x,\ket\psi)$ or to receive {\it input} $(x,\ket\psi)$ if:
(i)~the initial content of the classical tape is $x$ surrounded by blanks and the classical head is positioned in the rightmost blank before the classical input $x$;
(ii)~the initial content of the quantum tape is $\ket\psi$ surrounded by $\ket0$'s and the quantum head is positioned in the rightmost $\ket0$ before the quantum input $\ket\psi$.
Observe that the qubits containing the quantum input are not entangled with the other qubits of the quantum tape.
When the quantum tape is completely filled with $\ket0$'s we say that the quantum input is $\ket\varepsilon$.

Furthermore, the machine is said {\it to halt at} $(y,\ket\varphi)$ or to produce {\it output} $(y,\ket\varphi)$ if the computation terminates and:
(i)~the final content of the classical tape is $y$ surrounded by blanks and the classical head is positioned in the rightmost blank before the classical output $y$;
(ii)~the final content of the quantum tape is $\ket\varphi$ surrounded by $\ket0$'s and the quantum head is positioned in the rightmost $\ket0$ before the quantum output $\ket\varphi$.
In this situation we may write
$$M(x,\ket\psi) = (y,\ket\varphi).$$
Clearly, the qubits containing the quantum output are not entangled with the other qubits of the quantum tape.

For each $n\in\pnats$, denote by $\qH^n$ the Hilbert space of dimension $2^n$.
A unitary  operator 
$$U:\qH^n \to \qH^n$$ 
is said to be {\it dcq computable} if there is a dcq Turing machine $(Q,\delta)$ that,
for every unit vector $\ket\psi\in\qH^n$,
when starting from $(\varepsilon,\ket\psi)$ produces the quantum output $U\ket\psi$. 
Note that the final content of the classical tape is immaterial.

A (classical) problem
$$X \subseteq \{0,1\}^*$$
is said to be {\it dcq decidable} if 
there is a dcq Turing machine $(Q,\delta)$ that,
for every $x\in\{0,1\}^*$,
when starting from $(x,\ket\varepsilon)$ produces a quantum output $\ket\varphi$ such that:
$$
\begin{cases}
\Prob\left(\Proj_1\ket\varphi = 1\right) > 2/3 \;\text{ if }\; x\in X\\
\Prob\left(\Proj_1\ket\varphi = 0\right) > 2/3 \;\text{ if }\; x\not\in X
\end{cases}
$$
where $\Proj_1$ is the projective measurement defined by the operator
$$\begin{pmatrix} 0&0\\0&1 \end{pmatrix} \otimes \mathsf{ID}$$
using the adopted computational basis $\{\ket0,\ket1\}$,
with the first factor acting on the first qubit of the quantum output (the qubit immediately to the right of the quantum head) and the identity acting on the remaining qubits of the output. Clearly, the possible outcomes of the measurement  are the eigenvalues 0 and 1 of the defining operator.

Moreover, problem $X$ is said to be (time) {\it dcq bounded error quantum polynomial}, 
in short in $\dcBQP$, if 
there are polynomial $\xi\mapsto P(\xi)$ and 
a dcq Turing machine deciding $X$ that,
for each $x$, produces the output within $P(|x|)$ steps. In \cite{pmat:acs:asouto:14} it was established that the quantum computation concepts above coincide with those previously introduced in the literature using quantum circuits.

It is straightforward to see that dcq decidability coincides with the classical notion. It is enough to take into account that the dcq Turing machines can be emulated by classical Turing machines using a classical representation of the contents of the quantum tape that might be reached from $(x,\ket\varepsilon)$.

In the sequel we also need the following notion that capitalises on the fact that dcq Turing machines can work like classical machines by ignoring the quantum tape. A function
$$f:\{0,1\}^*\pto\{0,1\}^*$$
is said to be {\it classically dcq computable} if there is a dcq Turing machine $(Q,\delta)$ that,
for every $x\in\{0,1\}^*$,
when starting from input $(x,\ket\psi)$ produces the classical output $f(x)$ if $x \in \dom f$
and fails to halt with a meaningful classical output if $x \not\in \dom f$.

\begin{thm}[Polynomial translatability]\label{th:smn}\em\ 
There is a dcq Turing machine $T$ such that, 
for any dcq Turing machine $M=(Q,\delta)$,
there is a map 
$$s: \{0,1\}^* \to \{0,1\}^*$$
which is classically dcq computable in linear time and fulfils the following conditions:
\[\left\{\begin{array}{ll}
\forall\;p,x\in\{0,1\}^*, \ket\psi\in\qH^n,n\in\pnats 	& M(p \blk x, \ket\psi) = T(s(p) \blk x, \ket\psi)\\[2mm]
\exists\;c\in\nats\;\;\forall\;p\in\{0,1\}^* 			& |s(p)| \leq |p| + c.
\end{array}\right.\]
Moreover, there is a polynomial 
$(\xi_1,\xi_2,\xi_3) \mapsto P(\xi_1,\xi_2,\xi_3)$
such that 
if $M$ starting from $(p \blk x, \ket\psi)$ produces the output in $k$ steps
then $T$ produces the same output in at most
$$P(|p|+|x|,|Q|,k)$$
steps when starting from $(s(p) \blk x, \ket\psi)$.
\end{thm}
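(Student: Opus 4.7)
The plan is to construct $T$ as a universal dcq Turing machine that reads a self-delimiting encoding $\langle M \rangle$ of $M$ from the leftmost portion of its classical input and then simulates $M$ step by step on the rest; the translator $s$ will simply prepend this encoding to its argument. Crucially, because both $T$ and $M$ draw their quantum primitives from the same fixed set $\{\qid,\had,\phase,\qpi,\swap,\cnot\}$ and operate on a shared quantum tape, $T$ can faithfully reproduce every quantum step of $M$ by applying exactly the same primitive at the same location.

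First I fix a prefix-free binary encoding $M \mapsto \langle M \rangle$ of dcq Turing machines whose length is polynomial in $|Q|$ (concretely, the table for $\delta$ has $3|Q|-3$ entries, each a tuple from a fixed finite set together with $O(\log|Q|)$ bits of state information). I then define $s(p) = \langle M \rangle \blk p$. This map is realised by a classical dcq Turing machine that shifts $p$ rightward by $|\langle M \rangle|+1$ cells and writes the constant prefix $\langle M \rangle \blk$; both subtasks run in linear time. Setting $c := |\langle M \rangle|+1$ yields the bound $|s(p)|\leq|p|+c$.

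Next I construct $T$ with three operating phases. In the \emph{preparation} phase, $T$ reads the prefix-free encoding $\langle M \rangle$, relocates it to a dedicated ``program region'' further to the right of the tape (so that $M$'s virtual tape can extend freely to the left without overwriting the description), and initialises a ``state cell'' holding the current simulated control state $q \in Q$. In the \emph{simulation} phase, $T$ shuttles between the program region and the simulated tape region: it reads the symbol $a$ under $M$'s virtual classical head, walks to the program region to look up $\delta(q,a)=(U,d,a',d',q')$, and walks back to apply the primitive unitary $U$ to the quantum tape, to displace its quantum head by $d$, to write $a'$, to move the virtual classical head by $d'$, and to update the state cell to $q'$. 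Because the classical alphabet has only three symbols and $T$'s control is independent of $M$, the program and state regions are organised as macro-cells made of pairs of adjacent tape cells, giving the nine combinations needed to place internal markers. As soon as the state cell contains $\qh$, the \emph{cleanup} phase erases the program and state regions, shifts the simulated classical output leftward, and repositions $T$'s classical head on the rightmost blank before the output, leaving the tape in precisely the format that $M$ would have produced.

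Correctness then follows step for step: the quantum components of $T$'s and $M$'s outputs coincide because $T$ applies the identical sequence of primitives on the identical qubits, and the classical components coincide by construction after cleanup, yielding $M(p\blk x,\ket\psi)=T(s(p)\blk x,\ket\psi)$. For the time bound, each simulated step of $M$ costs $T$ at most $O(|\langle M \rangle|)$ for the table lookup plus $O(|p|+|x|+k)$ to shuttle between regions (the span of the virtual tape grows by at most one per step), so $k$ simulated steps cost a polynomial in $|p|+|x|$, $|Q|$, and $k$; preparation and cleanup add only polynomial overhead. The main obstacle is the bookkeeping: with only three tape symbols and a \emph{fixed} universal control set, one must place $\langle M \rangle$, the state cell and the virtual tape on the same tape without collision, use macro-cells to obtain enough markers, and ensure that $T$ can always recover the location of $M$'s virtual head after each detour into the program region. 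These details are standard for universal-machine constructions but need to be laid out carefully here because of the tiny alphabet and the shared quantum tape.
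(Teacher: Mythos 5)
Your proposal follows essentially the same route as the paper: prepend an encoding of $\delta$ to the program, have a fixed universal machine $T$ interpret that encoding step by step, and exploit the fact that the quantum primitives come from a fixed finite set acting on a shared quantum tape so that $T$ can replay $M$'s quantum evolution verbatim while only the classical bookkeeping needs real work. One slip to fix: you set $s(p)=\langle M\rangle\,\blk\,p$, but the theorem requires $s:\{0,1\}^*\to\{0,1\}^*$, so $s(p)$ may not contain the blank symbol; since you already insist the encoding is prefix-free the blank is unnecessary, or you can use a purely binary separator as the paper does (it appends $111$, which cannot begin any gate encoding, and then $p$).
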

\begin{proof}
Without any of loss of generality assume that 
$$Q=\{q_0,q_1,\dots, q_\nu,q_{\nu+1}\}$$
with $\qs=q_0$ and $\qh=q_{\nu+1}$. Hence, $|Q|=\nu+2$.
Consider the map 
\[
s = p \mapsto \udl{\delta}\,111\,p: \{0,1\}^*\to\{0,1\}^*
\]  
where $\udl\delta$ encodes $\delta$ as follows:
$$\udl{\delta(q_0,0)}\,\udl{\delta(q_0,1)}\,\udl{\delta(q_0,\blk)}\dots
	\udl{\delta(q_i,0)}\,\udl{\delta(q_i,1)}\,\udl{\delta(q_i,\blk)}\dots
		\udl{\delta(q_\nu,0)}\,\udl{\delta(q_\nu,1)}\,\udl{\delta(q_\nu,\blk)}$$
with each
\[\udl{\delta(q,a)}=\udl{U}\,\udl{d}\,\udl{a'}\,\udl{d'}\,\udl{q'} \in\{0,1\}^*\]
where
\[\begin{array}{rcl}
\udl{U}&=&\left\{
			\begin{array}{rcl}
					000 & \text{ if } & U=\qid\\
					001 & \text{ if } & U=\had\\
					010 & \text{ if } & U=\phase\\
					011 & \text{ if } & U=\qpi\\
					100 & \text{ if } & U=\swap\\
					101 & \text{ if } & U=\cnot
			\end{array}
			\right.\\
\ \\
\udl{d}&=&\left\{
			\begin{array}{rcl}
					00 & \text{ if } & d=\xL\\
					01 & \text{ if } & d=\xN\\
					11 & \text{ if } & d=\xR
			\end{array}
			\right.\\
\ \\
\udl{a'}&=&\left\{
			\begin{array}{rcl}
					00 & \text{ if } & a'=0\\
					11 & \text{ if } & a'=1\\
					10 & \text{ if } & a'=\blk
			\end{array}
			\right.\\
\ \\
\udl{d'}&=&\left\{
			\begin{array}{rcl}
					00 & \text{ if } & d'=\xL\\
					01 & \text{ if } & d'=\xN\\
					11 & \text{ if } & d'=\xR
			\end{array}
			\right.\\
\ \\
\udl{q'}&=&1^{j+1}00 
\end{array} \]
assuming that $\delta(q,a) = (U,d,a',d',q')$ and $q'= q_j$.
Notice that one can identify in $s(p)$ the end of the encoding of $\udl{\delta}$ 
since each $\udl{\delta(q,a)}$
starts with $\udl U$
and the sequence $111$ does not encode any gate.
Clearly, as defined, $s$ can be dcq computed in linear time and fulfils the conditions in the statement of the theorem by taking $c=|\udl\delta|+3$.

It is necessary to encode in the classical tape of $T$ the current classical configuration of $M$ (composed of the current contents of the classical tape, the current position of the classical head and the current state of the control automaton). There is no need to encode the quantum configuration of $M$ since in a dcq Turing machine it does not affect its transitions. In due course, when explaining how $M$ computations are emulated by $T$ computations, we shall see how quantum configurations of $T$ are made to follow those of $M$. The following notation becomes handy for describing classical configurations of dcq Turing machines.

We write
$$\cfg{w}{q}{a}{w'}$$
for stating that the machine in hand is at state $q$, 
its classical head is over a tape cell containing symbol $a$,
with the finite sequence $w$ of symbols to the left of the head,
with the finite sequence $w'$ of symbols to the right of the head,
and with the rest of the classical tape filled with blanks.

Before describing how a classical configuration of
$M$ is encoded in $T$ we need to introduce some additional notation. Recall that a symbol $a\in\{0,1,\blk\}$ is encoded as
\[
\udl{a}=\left\{
			\begin{array}{rcl}
					00 & \text{ if } & a=0\\
					11 & \text{ if } & a=1\\
					10 & \text{ if } & a=\blk.
			\end{array}
			\right.
\]
We denote by $\udl{a}_1$ and $\udl{a}_2$ the first and second bit of $\udl{a}$, respectively. The reverse encoding of $a$ is $\udl{\udl{a}}=\udl{a}_2\udl{a}_1$. Given a string $w=w_1\dots w_m\in \{0,1,\blk\}^*$, we denote its encoding by $\udl{w}=\udl{w_1}\dots \udl{w_m}$ and its reverse encoding by $\udl{\udl{w}}=\udl{\udl{w_m}}\dots \udl{\udl{w_1}}$.

The classical configuration
 \[
 \cfg{w}{q_i}{a}{w'} 
\]
of $M$ should be encoded as the following classical configuration of $T$ 
\noindent
\[
\cfg
	{\udl{w}\; \blk\; \underbrace{\underbrace{1\dots 1}_{\nu+i-1}\; \blk\;
		\underbrace{1\dots 1}_{i+1}}_{q_i}} 
	{q'}
	{\blk} 
	{\udl{\delta} \; 111\; \rdl{w'} \;{\udl{a}}_2 {\udl{a}}_1 }
\]
where $q'$ is a state of $T$ representing the stage where the machine is able to start emulating a transition of $M$. As we shall see later, whenever a transition of $M$ has just been emulated by $T$ and the resulting state is not the halting state of $M$, $T$ is at state $q'$.

The initial classical configuration of $M$ is
\[
\cfg
	{}
	{q_0}{\blk}
	{p\;\blk\;x}
\]
and, moreover, the initial classical configuration of $T$ is
\[
\cfg
	{}
	{q'_0}
	{\blk}
	{\underbrace{\udl{\delta}\;111\;{p}}_{s(p)}\; \blk \; x},
\]
where $q'_0$ is the initial state of $T$.
The objective of this stage is to change the initial classical configuration of $T$
to the encoding of the initial configuration of $M$, as described before, that is:
\[
\cfg
	{\underbrace{1\dots 1\; \blk\; 1}_{\text{encoding of } q_0} }
	{q'}
	{\blk}
	{\udl{\delta} \;111\; \rdl{\blk p \blk x} }.
\]
Writing the encoding of $q_0$ can be done straightforwardly in 
$O(k)$ steps. 
It remains to  describe how to encode $\blk p\blk x$ in reverse order within 
$O((|p|+|x|)^2)$ steps, keeping $\udl \delta\;111$ unchanged:

\begin{enumerate}

\item{\bf Encoding of $x$}:
Recall that 
$x=x_1\dots x_m$ 
has no blanks, and therefore
$\udl x=x_1x_1\dots x_mx_m$. 
The idea is to 
	shift $x_2\dots x_m$ to the right, 
	duplicate $x_1$ in the vacated cell and 
	then iterate this process to $x_2\dots x_m$.
First, the head moves on top of $x_2$ and 
copies the contents of $x_2\dots x_m$ one cell to the right, leaving the original cell of $x_2$ with a blank. 
Then the head moves back to $x_1$ and 
copies its contents to the cell on its right. 
The process is iterated for $x_2\dots x_m$ until the last symbol of $x$ is reached.  
Since shifting to the right the contents of $m$ cells, leaving the first one blank, 
can be done with a linear number of steps in $m$, 
this operation takes a quadratic number of steps on the size of $x$.

\item{\bf Encoding the  $\blk$ in  $p\blk x$}:
First, 
	the encoding of $x$ is shifted one cell to the right and then, 
	the head is moved back to the top of the first two blanks separating ${p}$ and $\udl{x}$. 
Finally, the head replaces the two blanks by $10$ and it is parked in the $1$.
Note that this can be done with a linear number of steps on the size of $x$, and moreover, the encoding of $\blk x$ has no blanks.

\item{\bf Encoding of $p$}:
Let $l$ be the size of $p$. The encoding of $p$ is similar to the encoding of $x$. 
First the encoding of $\blk x$ is shifted three cells to the right.
Then the head of the machine is moved to the beginning of $p$. Notice that the machine can identify it as the first cell on the right of $\udl\delta\;111$. 
Next, $p$ is shifted one cell to the right (which leaves two blanks before  $\blk x$) and the head of the machine is moved to the cell containing $p_l$. 
The machine copies $p_l$ to the two cell immediately on its right and writes $\blk$ in the original cell. After these steps, the content of the classical tape is:
\[
\cfg
	{\udl{\delta} \;111\; \blk\; p_1\dots p_{l-1}}
	{q'}
	{\blk}
	{\udl{p_l\blk x} }.
\]
Next $\udl{p_l\blk x}$ is shifted one cell to the right and the process of writing the encoding of $p_{l}$ in the tape is repeated for $p_{l-1}$, $p_{l-2}$, \dots until $p_1$. The end of this construction is reached whenever the symbol $\blk$ is placed after $\udl \delta\;111$ is read.  
Finally, $\udl{p\;\blk\; x}$ is shifted two cells to the left.

\item{\bf Reversing the encoding of $p\blk x$}:
Assume that 
$\udl{p\blk x}=y_1\dots y_{m}$ 
(with  $m=|\udl{p}|+|\udl{x}|+2$) 
is the contents of the cells containing the encoding of 
$p\blk x$. 
The objective is to replace $y_1\dots y_{m}$ by 
$y_m\dots y_1$. 
First, the cell containing $y_1$ is replaced by a blank and $y_1$ is copied to the right cell of $y_{m}$. 
Second, the sequence 
$y_2\dots y_{m}$ is shifted one cell to the left. 
This process is repeated with 
$y_2\dots y_{m}$ 
in such a way that $y_2$ is copied to the left of $y_1$ and until the contents of the tape is 
$\blk y_m\dots y_1$.
Finally, the blank symbol is removed when 
$y_m\dots y_1$ 
is shifted one cell to the left. 
Observe that the operations leading to 
$\blk y_m\dots y_1$,
take $O(m^2)$ steps. Moreover, the final shift is linear, and so the overall stage takes a quadratic number of steps.

\item{\bf Placing the reverse encoding of a blank at the right end}:
The head is moved to the right until the first blank is found. Then, the head writes a $0$ and moves one cell to the right, where it writes a $1$. Finally, the head is moved to the left until the first blank is found. 

\end{enumerate}

It is straightforward to check that 
the overall cost of these operations is quadratic on $|p|+|x|$ 
and that the five stages above require just a constant number of states in $T$ 
(that is, the number of states does not depend on $p$ and $x$).

Next, we describe the steps needed to emulate in $T$ one step by $M$. 
Assume that the transition to be emulated is 
$\delta(q_i,a)=(U,d,a',d',q_j)$ and that 
$T$ is at the following classical configuration:

\[
\cfg
	{\udl{w}\; \underbrace{\underbrace{1\cdots 1}_{\nu-i-1} \;\blk\; \underbrace{1\cdots 1}_{i+1}}_{\text{encoding of } q_i} }
	{q'}
	{\blk}
	{\udl{\delta} \; 111 \; \rdl{w'}\; \udl{a_2}\; \udl{a_1}}.
\]

\noindent
The objective is to set $T$ at the following classical configuration
\[
\cfg
	{\udl{w''}\; \underbrace{\underbrace{1\cdots 1}_{\nu-j-1} \;\blk\; 
			\underbrace{1\cdots 1}_{j+1}}_{\text{encoding of } q_j} }
	{q'}
	{\blk}
	{\udl{\delta} \; 111 \; \rdl{ w''' }}
\]

\noindent
where, depending on the move of the classical head, three cases may occur:
\begin{itemize}
\item if $d'=N$ then
		$w=w''$ and 
		$w'''=a'w$;
\item if $d'=L$ then 
		$w''=w_1\dots{w}_{|w|-1}$ and 
		$w'''={w_{|w|} a' w'}$;
\item if $d'=R$ then 
		$w''={wa'}$ and
		$w'''=w'_2\dots w'_{|w'|}$.
\end{itemize}

\noindent
Machine $T$ performs the emulation of $\delta(q_i,a)$ as follows:

\begin{enumerate}

\item {\bf Identifying the value $a$}:
The head of the classical tape of $T$ is moved to $\udl{a}_1$ which is the rightmost cell that is not blank. 
The head reads the contents of that cell and the contents of the cell on its left, which has $\udl{a}_2$, and goes to a different state of $T$ depending on the value $a$.
The cost of this operation is linear in the number of states of $M$ and on the space used by $M$.

\item{\bf Parking the head at the encoding of $\delta(q_i,a)$ in $\udl{\delta}$}:
First the head is moved to the cell containing the rightmost $1$ of the encoding of $q_i$. 
Notice that such encoding has at least one $1$ to the right of the blank.  
Since the head starts from  position $\udl{a}_2$, such $1$ is on the left to the first blank that the head finds while reading the classical tape from right to the left. So, 
this operation is at most linear in the size of the space used by $M$. 
Recall that the encoding $\udl{\delta}$ of $\delta$ is as follows:
$$\textrm{\small$
	\udl{\delta(q_0,0)}\udl{\delta(q_0,1)}\udl{\delta(q_0,\blk)}\dots
		\udl{\delta(q_i,0)}\udl{\delta(q_i,1)}\udl{\delta(q_i,\blk)}\dots
			\udl{\delta(q_\nu,0)}\udl{\delta(q_\nu,1)}\udl{\delta(q_\nu,\blk)}$.}$$

Moreover, 
each $\udl{\delta(q_i,a)}$ ends with $00$ and starts 
with nine cells corresponding to $\udl{U}\cdot\udl{d}\cdot\udl{a'}\cdot\udl{d'}$ 
and a sequence of 1's, encoding the resulting state of that transition. 
This stage consists in a loop with progress variable, say $r$, starting from  $r=1$ until $r=i+1$. The goal of the loop is to replace the $r$ rightmost 1's of encoding of $q_i$ by 0's 
while the $00$, at the end of $\udl{\delta(q_{r-1},\blk)}$, are replaced by $\blk\blk$. 
The end of the loop $r=i+1$, is detected when a blank symbol is read in the encoding of $q_i$. 
For each value of $r$ we keep only a pair of  
$\blk\blk$ in $\udl{\delta}$:
those at the end of 
$\udl{\delta(q_{r-1},\blk)}$. 
When $r=i+1$, the encoding of $\delta(q_{i},\blk)$ is marked in $\delta$ with $\blk\blk$, 
and so, it remains to park the head in the first cell of the encoding of $\delta(q_i,a)$. 
This movement can be achieved taking into account the symbol $a$ read in the previous stage.
Observe that all the operations performed in this stage depend 
linearly on the space used by $M$ (on the right of its classical head) and
quadratically on the number of states of $M$.

\item{\bf Identifying and applying $U$}:
Using the first three cells of $\udl{\delta(q_i,a)}$, 
the machine $T$ identifies the unitary transformation and 
applies it to its own quantum tape. 

\item{\bf  Performing the $d$-move of the quantum head}:
Using the fourth and fifth cells of  $\udl{\delta(q_i,a)}$, 
$T$ identifies the movement of the quantum head and 
operates accordingly on its own quantum head.

\item{\bf Identifying and writing $a'$}:
Using the sixth and seventh cells of $\udl{\delta(q_i,a)}$, 
$T$ identifies the encoding of the symbol $a'$ to be written under $\udl{a}_2\udl{a}_1$. 
The encoding of $a'$ in $\udl{\delta(q_i,a)}$ is marked with two blanks and $
\udl{a'}$ is copied in reversed order to $\udl{a}_2\udl{a}_1$,
which are the two rightmost non-blank cells. 
After completing the last operation, the head  returns to the original position and restores $\udl{a'}$ in $\udl{\delta(q_i,a)}$. 
Notice that the operations of this stage can be done in a 
linear number of steps on the space used by $M$ the input and
linearly in the number of states.

\item{\bf Performing the $d'$-move of the classical head}:
The ninth and tenth cells of  $\udl{\delta(q_i,a)}$ store the movement of the classical head. 
If $d'=N$ nothing has to be done. 
W.l.o.g. assume $d'=R$. 
	First, the encoding of $d'$ is marked with two blanks. 
	Then the rightmost non-blank cells have to be copied (in reverse order) to the left of the leftmost non-blank cells. 
Clearly the rightmost non-blank cells have to be replaced by two blanks if $|w'|>0$, 
and have to be replaced by $01$ (the reverse encoding of a blank) if $|w'|=0$. 
{\it Mutatis mutandis} if $d'=L$. 
This stage can be done in a linear number of steps 
on the space in the classical tape used by $M$. 

\item{\bf Updating the emulated state to $q_j$}:
Assume that $q_j$ is not the halting state and 
recall that $q_j$ is encoded as $1^{j+1}00$ at the rightmost part of $\udl{\delta(q_i,a)}$. 
The idea is to update the emulated state $q_i$ to $q_j$ 
by replacing each 1 in $1^{j+1}00$ at $\udl{\delta(q_i,a)}$ by a $\blk$ 
while updating the cells used to encode the current state of $M$. 
Given Stage 2, the cells used to encode $q_i$ contain $1^{\nu-i-1}\blk 0^{i+1}$. 
If $j\leq i$ then 
	we replace $j+1$ rightmost 0's by 1's and then place a $\blk$ left to them. 
If $j>i$, then 
	the $i+1$ rightmost  0's are replaced by 1's and after, 
	the blank has to be carried to the left while being replaced by a 1, 
	until there are $(j+1)$ 1's. 
This process ends when all 1's in $1^{j+1}00$ have been replaced by blanks. 
After the cells encoding the emulated state are updated, 
the encoding of $q_j$ in $\udl{\delta(q_i,a)}$ is restored, 
by replacing the blanks by 1's. 
This stage does not depend on the input of $M$, but only 
quadratically in the number of states of $M$.
If $q_j$ is the halting state, 
we have to restore the contents of the classical tape to 
$wa'w'$, 
with the head positioned over $a'$. 
This corresponds to inverting the process used to prepare the initial configuration, erasing the encoding of $q_i$ and $\delta$. Such stage can be done in a number of steps 
quadratic to the space used by $M$ and
linearly in the number of states of $M$.
\end{enumerate}
Finally, the overall emulation is polynomial (in fact quadratic) on 
$|p|+|x|$, $\nu$ and $k$ since the space used by $M$ is bounded by $k$.
\end{proof}


A machine $T$ fulfilling the conditions of Theorem~\ref{th:smn} is said to enjoy the \smn\ property.
Any such machine is universal as shown in the next result.

\begin{thm}[Polynomial universality]\label{cor:univ}\em\ Let $T$ be a dcq Turing machine enjoying the \smn\ property.
Then, for any dcq Turing machine $M=(Q,\delta)$,
there is $p \in \{0,1\}^*$
such that
$$M(x, \ket\psi) = T(p \blk x, \ket\psi) \tab\forall\;x\in\{0,1\}^*,\ket{\psi}\in\qH^n,n\in\pnats.$$
Moreover, there is a polynomial 
$(\xi_1,\xi_2,\xi_3) \mapsto P(\xi_1,\xi_2,\xi_3)$
such that 
if $M$, when starting from $(x, \ket\psi)$, produces the output in $k$ steps
then $T$ produces the same output in at most
$$P(|x|,|Q|,k)$$
steps when starting from $(p \blk x, \ket\psi)$.
\end{thm}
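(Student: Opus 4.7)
The plan is to derive this as an immediate consequence of the $s$-$m$-$n$ property, once a small off-by-one in the initial head position is handled. A naive attempt --- setting $p := s(\varepsilon)$ directly from Theorem~\ref{th:smn} applied to $M$ --- only yields $T(s(\varepsilon)\blk x,\ket\psi) = M(\varepsilon\blk x,\ket\psi)$, and the right-hand side is \emph{not} $M(x,\ket\psi)$: starting $M$ on input $\varepsilon\blk x$ places its classical head one cell further to the left of $x$ than starting it on input $x$.

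To bridge this one-cell gap, I would introduce an auxiliary dcq Turing machine $M' = (Q',\delta')$ obtained from $M$ by adjoining a fresh initial state $q'_0$ (so $|Q'| = |Q|+1$) with the single new transition
\[
\delta'(q'_0,\blk) = (\qid,\xN,\blk,\xR,\qs),
\]
all other transitions agreeing with those of $M$. Thus the very first step of $M'$ leaves the quantum tape and head untouched, rewrites the blank under its classical head, advances the classical head one cell to the right, and enters the original start state $\qs$ of $M$. Because the classical tape is filled with blanks on both sides, after this single step $M'$ sits in precisely the configuration in which $M$ would start from $(x,\ket\psi)$, and the two machines then evolve identically. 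Consequently
\[
M'(\varepsilon\blk x,\ket\psi) = M(x,\ket\psi)
\]
for all $x\in\{0,1\}^*$ and $\ket\psi\in\qH^n$, and if $M$ halts on $(x,\ket\psi)$ in $k$ steps then $M'$ halts on $(\varepsilon\blk x,\ket\psi)$ in $k+1$ steps.

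I would then apply Theorem~\ref{th:smn} to $M'$, obtaining a classically dcq computable map $s'$ with $M'(q\blk x,\ket\psi) = T(s'(q)\blk x,\ket\psi)$ for all $q \in \{0,1\}^*$. Specialising to $q = \varepsilon$ and combining with the identity above yields $M(x,\ket\psi) = T(s'(\varepsilon)\blk x,\ket\psi)$, so it suffices to take $p := s'(\varepsilon) \in \{0,1\}^*$. For the running-time bound, the polynomial $P'$ provided by Theorem~\ref{th:smn} for $M'$ bounds the cost of $T$ on input $(p\blk x,\ket\psi)$ by $P'(|x|,|Q|+1,k+1)$, which is a polynomial in $|x|$, $|Q|$ and $k$, and thus furnishes the required polynomial $P$.

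I do not expect a substantial obstacle beyond this head-position bookkeeping: the only point requiring care is to verify that the constant overhead of $M'$ (one extra state, one extra step, unchanged tape alphabet) is absorbed harmlessly by the polynomial supplied by the $s$-$m$-$n$ theorem, and that $s'(\varepsilon)$ is indeed a legitimate binary prefix --- which is clear from the explicit form $s'(p) = \udl{\delta'}\,111\,p$ exhibited in the proof of Theorem~\ref{th:smn}.
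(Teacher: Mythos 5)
Your proposal is correct and takes essentially the same route as the paper: the paper's proof likewise introduces an auxiliary machine $M'$ with $M'(\varepsilon\blk x,\ket\psi)=M(x,\ket\psi)$ and takes $p=s(\varepsilon)$, merely leaving implicit the explicit one-step head-shift construction of $M'$ and the timing bookkeeping that you spell out. The only cosmetic point is that, to satisfy the totality requirement $(q,a)\in\dom\delta'$ for every $a\in\qA$ and $q\neq\qh$, you should also define $\delta'(q'_0,0)$ and $\delta'(q'_0,1)$ (arbitrarily), even though these transitions are never exercised on inputs of the form $(\varepsilon\blk x,\ket\psi)$.
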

\begin{proof} Consider $M'$ such that
$M'(\varepsilon \blk x,\ket\psi) = M(x,\ket\psi)$.
By applying Theorem~\ref{th:smn} to $M'$ and choosing $p=s(\varepsilon)$ the result follows.
\end{proof}

\subsubsection*{Acknowledgments}
This work is supported in part by the National
Natural Science Foundation (Nos. 61272058, 61073054, 60873055, 61100001), the Natural Science Foundation of Guangdong Province of China (No. 10251027501000004),   the Specialized Research Fund for the Doctoral Program of Higher Education of China (Nos. 20100171110042, 20100171120051), the Fundamental Research Funds for the Central Universities (No. 11lgpy36), and the project of  SQIG at IT, funded by FCT and EU FEDER projects QSec PTDC/EIA/67661/2006, AMDSC UTAustin/MAT/0057/2008, NoE Euro-NF,
and IT Project QuantTel, FCT project PTDC/EEA-TEL/103402/2008
QuantPrivTel, FCT PEst-OE/EEI/LA0008/2013.

The authors also acknowledge IT project QbigD funded by FCT PEst-OE/EEI/LA0008/2013, the Confident project
PTDC/EEI-CTP/4503/2014 and the support of LaSIGE Research Unit, ref. UID/CEC/00408/2013.

\end{document}